\newtheorem{theorem}{Theorem}
\newcommand{\BO}[1]{{ O}\left(#1\right)}
\newcommand{\BT}[1]{{\Theta}\left(#1\right)}
\begin{document}
%
\title{Blocking Techniques for Sparse Matrix Multiplication on Tensor Accelerators}
%
%
%
%

\author{Paolo~Sylos~Labini, Massimo~Bernaschi, Francesco~Silvestri, Flavio~Vella
\IEEEcompsocitemizethanks{\IEEEcompsocthanksitem 
Paolo Sylos Labini is with Free University of Bozen-Bolzano, Italy. \protect\\
E-mail: paolo.syloslabini@stud-inf.unibz.it
\IEEEcompsocthanksitem Francesco Silvestri is with the University of Padova, Italy. \protect
\IEEEcompsocthanksitem Massimo Bernaschi is with the IAC-CNR, Italy
\IEEEcompsocthanksitem Flavio Vella is with the University of Trento, Italy. \protect
}
}
%
%

\markboth{Journal of \LaTeX\ Class Files,~Vol.~14, No.~8, August~2015}%
{Shell \MakeLowercase{\textit{et al.}}: Bare Demo of IEEEtran.cls for Computer Society Journals}
%



\IEEEtitleabstractindextext{%
\begin{abstract}
Tensor accelerators have gained popularity because they provide a cheap and efficient solution for speeding up computational-expensive tasks in Deep Learning and, more recently, in other Scientific Computing applications. 
However, since their features are specifically designed for tensor algebra (typically dense matrix-product), it is commonly assumed that they are not suitable for applications with sparse data.
To challenge this viewpoint, we discuss methods and present solutions for accelerating sparse matrix multiplication on such architectures.
In particular, we present a 1-dimensional blocking algorithm with theoretical guarantees on the density, which builds dense blocks from arbitrary sparse matrices. Experimental results show that, even for unstructured and highly-sparse matrices, our block-based solution which exploits Nvidia Tensor Cores is faster than its sparse counterpart.
We observed significant speed-ups of up to two orders of magnitude on real-world sparse matrices. 
\end{abstract}

\begin{IEEEkeywords}
Tensor Architectures; Sparse Matrix Operations; Algorithms for block reordering; GPU Computing;
\end{IEEEkeywords}}

\maketitle

\IEEEdisplaynontitleabstractindextext

%
\IEEEpeerreviewmaketitle

\IEEEraisesectionheading{\section{Introduction}\label{sec:introduction}}

%
%
%
%

\IEEEPARstart{W}{ith} the advent of Deep Learning (DL) as the mainstream methodology for extracting new knowledge from massive amounts of data, a plethora of special-purpose architectures have been developed for accelerating the training and inference phases of Deep Neural Networks~(DNN). Most popular examples include accelerators like the Tensor Processing Unit~(TPU)~\cite{jouppi2017datacenter} or the Intelligent Processing Unit~(IPU)~\cite{graphcore-ipu} and compute units like the Nvidia Tensor Cores~(TCs)~ \cite{nvidia-ampere} among others~\cite{9286149}. \textit{Tensor accelerators} target the most expensive operation in DNNs, namely the multiply-and-accumulate operation, and specialize in the parallel multiplication of large batches of small dense matrices. Recently, their design and capabilities also enable the acceleration of several fundamental primitives of traditional scientific applications~\cite{10.1145/3437801.3441599, ROMERO2020107473,dakkak2019accelerating}, offering challenges and opportunities from the theoretical viewpoint~\cite{CSV20, CSV21}.

However, a broad class of applications, including DNNs themselves, Graph Neural Networks and Graph Analytics, is characterized by sparse input domains, resulting in highly-irregular computations and memory access patterns.
This limits the scalability of parallel algorithms with low arithmetic intensity, such as sparse matrix multiplication (SpMM), even on architectures that offer high-bandwidth memories like Graphics Processing Units (GPU)~\cite{10.1145/1504176.1504181}.
From the architecture design perspective, there are examples of highly-specialized hardware that natively perform sparse multiplication~\cite{9065428,scheff-stream}, but most of them come with several limitations regarding the data-layout (e.g. they assume squared matrices only) or the sparsity pattern (e.g. structured sparsity only~\cite{nvidia-ampere}). This limits their adoption and diffusion~\cite{survGNN-acc-surv}. 
For all these reasons, high-performance SpMM solutions tend to be extremely customized, over-fitted to the problem, data and hardware at hand~\cite{FilipponeReview,10.1145/1654059.1654078}. This results, inevitably, in a lack of performance portability and usability.
The diffusion and cost-effectiveness of tensor accelerators, as well as the availability of libraries for dense matrix multiplication, suggest investigating how to exploit such architectures to accelerate SpMM.
Intuitively, solutions to this problem would cluster the nonzero elements of the matrix and feed them to compute units specialized for the dense product. However, there are many possible ways to implement such a "compression". One approach would consist of explicitly storing the coordinates of non-zero elements of the matrix into external \textit{ad-hoc} data-structures, performing the dense multiplication and then storing back the results~\cite{accTensor2020}. 
An alternative approach consists in looking for dense, or almost dense substructures of the input matrix. This allows re-using higher-level routines that are already optimized for dense multiplication, on top of the tensor accelerator. Unfortunately, dense substructures in sparse matrices are often hidden by the arbitrary labelling of rows and columns. Here, blocking algorithms try to solve this by permuting and grouping along the dimensions of the matrix.

Most existing blocking algorithms focus on symmetric matrices and symmetric reorderings, treating the matrix as the adjacency matrix of a graph to reduce the number of iterations of numerical solvers~\cite{CMReorder,PABLO, Saad}. A straightforward application of such methods work poorly on matrices with arbitrary shapes (e.g. sparse layers in DNNs) and sparsity patterns (e.g. real-world graphs). Furthermore, they only partially control the size and internal sparsity of the blocks, failing to consistently improve the performance of SpMM on parallel architectures~\cite{PichelComparison}.

To address these challenges, in this paper we present a method to build blocks of tunable size with theoretical guarantees on the density. 
Specifically, we designed a 1-dimensional reordering algorithm that overcomes the aforementioned limitations. We discuss under what conditions (block dimensions and density) this approach can efficiently exploit tensor accelerators for spMM.
Finally, we present evidence that, for a wide class of sparse matrices, the recent advancements in dense accelerators make it convenient to use traditional dense-specific routines instead of sparse-specific ones.  
Our contributions can be summarized as follows:

\begin{itemize}
    \item We present 1-SA, a 1-dimensional blocking algorithm that can decompose a general sparse matrix in dense blocks, with control over the size and density of blocks.
    \item We provide theoretical guarantees on the density of the reordered matrix produced by the 1-dimensional blocking algorithm.
    We also asymptotically analyze the running time of a sparse-dense matrix multiplication when using a reordered sparse matrix, by using the TCU computational model for Tensor Cores~\cite{CSV21,AS20}.
    \item We analyze the efficacy of 1-SA on a large synthetic dataset and we provide experimental evidence that even for very sparse matrices, using a block-based SpMM routine on tensor accelerators is more efficient than using a sparse-specific one in the Nvidia hardware/software stack. Specifically, we observe significant performance improvement (from 3x up to 100x) on real-world sparse matrices when using cuBLAS compared to cuSPARSE.

\end{itemize}

The manuscript is organized as follows. In Section~\ref{sec: background}, we provide the background and notation used for presenting the methodology and the algorithms discussed in Section~\ref{sec: methods-and-algos}. Experimental evaluation of both the blocking algorithm and its effect on multiplication performance is addressed in Section~\ref{sec: exp}. Finally, we provide an overview of other relevant works in Section~\ref{sec:related} and we draw conclusions in Section~\ref{sec:conclusion}.

\section{Background}
\label{sec: background}
The development of new sparse multiplication kernels is tightly related to sparse storage formats used for representing the data. 

Dense storage formats allow for good spatial and temporal locality when multiplying, and are especially effective for high-performance parallel multiplication, but have an obvious downside --- the explicit storing (and processing) of all the zero entries. For very sparse matrices, these can be several orders of magnitude more than the nonzeros. In most practical situations, the additional storage space and multiplication workload eat away any benefit
coming from the data regularity.
Efficient sparse storage formats provide ways to store only, or mostly, nonzero
entries. The most popular, general purpose storage format is the Compressed Sparse Rows
(CSR) format. CSR stores only the nonzero entries, row by row, in a contiguous array, and uses two additional arrays to keep track of their positions.
However, the flexible structure of CSR makes it hard to optimize spMM on the GPU, due to low spatial and temporal locality and unbalanced workloads \cite{designSPMM}. 
A great deal of research has been focused on mitigating these issues, either by mapping a CSR multiplication efficiently on the GPU~\cite{designSPMM, spMMHongTiling} or by devising modified versions of the CSR format, such as CSR5~\cite{liu2015csr5}. 

Blocked storage formats achieve a middle ground between sparse and dense storage formats. These formats, such as the blocked CSR (BCSR)~\cite{VuducPHD}, are a popular choice for spMM on the GPU, because they reduce the indexing overhead and improve data reuse for suitable matrices~\cite{FilipponeReview}, but they pose unique challenges. \textit{Fill-in} (zeros stored as nonzeros) is an inevitable consequence of most types of blocking which increases their memory and computational footprint, particularly on unstructured sparse matrices. Flexible storage formats such as Unaligned BCSR (UBCSR)~\cite{VuducPHD} may reduce fill-in, but come with the costs of a heavier indexing structures. 
In this work, we will employ the Variable Block Row (VBR)~\cite{Saad94sparskit:a} format, a variant of BCSR which allows for blocks of variable size. To further reduce fill-in, we will make use of reordering and blocking techniques that rearrange the nonzero entries to expose dense substructures within the matrix.

\subsection{Similarity-based blocking}
Our analysis starts from an algorithm developed by Saad~\cite{Saad} for the preconditioning of iterative solvers, in turn based on the seminal work of Ashcraft~\cite{AshMinimumDegree}. Ashcraft's algorithm, briefly exemplified in Algorithm~\ref{algo:hash}, uses hashing to identify (and group together) identical rows (and columns) in a symmetric matrix. Then, by reordering the rows and columns according to the resulting partition, it clusters the nonzeros in perfectly dense blocks. Saad's version further compresses the symmetric matrix by also grouping rows and columns which are not identical, but just similar. Thus, it creates blocks which are only approximately dense. Unfortunately this approach, while effective in creating big, almost dense blocks, is not directly applicable to rectangular or asymmetric matrices. In this paper, we will consider instead 1-dimensional blocking algorithms, that is, those that only reorder one dimension (e.g., the rows) of the matrix leaving the other dimension untouched. The benefit of 1-dimensional blocking is that, when multiplying our sparse matrix by a dense one, we can leave the original order of the latter intact. This saves us the cost of reordering the dense matrix at run-time. In general, 1-dimensional reorderings allows us to keep any predetermined convenient ordering and grouping of the columns intact. In Section~\ref{sec: methods-and-algos} we will detail 1-SA, a modification of Saad's algorithm that reorders one dimension of a rectangular sparse matrix, and forces blocks of fixed size on the other dimension.
We start by discussing the algorithm from Saad~\cite{Saad}. We will then show how to extend Saad's algorithm to the case of 1-dimensional blocking. 

Saad's algorithm (SA) takes as input a symmetric matrix $A$ and a similarity parameter $\tau$. The first step in SA is a graph compression technique due to Ashcraft~\cite{AshMinimumDegree}. This technique amounts to hash each row so that identical rows are binned together. Ashcraft propose the simple but effective hash function
\begin{equation}
    h(v_i) = \sum_{ v_{i,j} \neq 0} {j}
    \label{hash}
\end{equation}
where $v$ is a row and $i$ and $j$ are respectively row and column indices. 
Eq. \eqref{hash} will hash identical rows to the same value. After checking for hash collisions, partitioning identical rows (and columns) of $A$ together will expose perfectly dense blocks in $A$. The hash function in \eqref{hash} is quite prone to collisions, but it has the advantage of being very fast to compute. As a matter of fact, the time spent checking for collisions is negligible compared to other steps of the procedure. Moreover, collisions can be reduced by taking in account the size of the rows in addition to their hash.

\begin{algorithm}
	\caption{Hash-based compression} 
	Input: A CSR matrix.
	\\Output: a mapping from rows to groups
	\begin{algorithmic}[1]
	\State Compute $\mathrm{h}(v_i)$ for each row $v_i$
	\State Sort rows by $\mathrm{h}(v_i)$
	\State Set $\mathrm{group}(v_i) = -1 \; \mathrm{for} \; i = 1,\ldots,N$
	\For {$i=1,\ldots, N$}
	    \If{$\mathrm{group}(v_i) \neq -1$} \textbf{skip}
	    \EndIf
	    \State $\mathrm{group}(v_i) = i$
        \For {$j = i+1, \ldots N$}
            \If{$h(v_j) \neq h(v_i)$} \textbf{break}
        	\Else 
        	    \If {$\mathrm{group}(v_j) = -1$}
        	        \If {$\mathrm{pattern}(v_i)$ = $\mathrm{pattern}(v_j)$} 
        	            \State $\mathrm{group}(v_j) = i$;
        	        \EndIf
        	    \EndIf
        	\EndIf
        \EndFor
	\EndFor
	\end{algorithmic} 
	\label{algo:hash}
\end{algorithm}

In some matrices an implicit block structure exists but is not perfect, meaning that many nodes are similar but not identical. Uncovering these imperfect substructures may allow to find larger blocks with a moderate loss in density. SA measures the similarity of nodes with cosine similarity, namely
\begin{equation}
    cos(v,w) = \frac{\langle v, w \rangle}{\|v\| \|w\|}.
    \label{cosinesim}
\end{equation}
Other similarity measures may be easily substituted in place of Eq. \ref{cosinesim}. In our algorithm, we will use Jaccard similarity instead, which provides us with better theoretical bounds. The jaccard similarity of two sets A, B (in this case, the sets of nonzero indices in two rows) is defined as: 
\begin{equation}
    jaccard(A,B) = \frac{|A\cap B|}{|A\cup B|}.
    \label{jaccard_sim}
\end{equation}

After compressing the graph with Ashcraft's method, SA compares the (compressed) nodes against each other, and merges them if their cosine similarity exceeds $\tau$. SA groups rows and columns in the same way to generate a blocking of the symmetric matrix.
 
SA can be readily adapted to rectangular matrices, as briefly mentioned in Saad's paper~\cite{Saad}. It suffices to drop the assumption of symmetry, and just compress and compare the rows (or the columns) of the matrix using Eq.~\eqref{hash} and Eq.~\eqref{cosinesim}. Ater the rows have been blocked, the columns may be then partitioned (e.g., uniformly) to create rectangular blocks. Unfortunately, this naive 1-dimensional implementation of SA is not nearly as effective as its symmetric counterpart, as it fails to group together elements which are close but not exactly on the same column. This limitation is addressed by our algorithm, 1-SA, which first partition the columns and then reorders the rows, thus accounting for entries with close indices when comparing rows.
\begin{figure}
    \centering
    \includegraphics[width = 0.8 \linewidth]{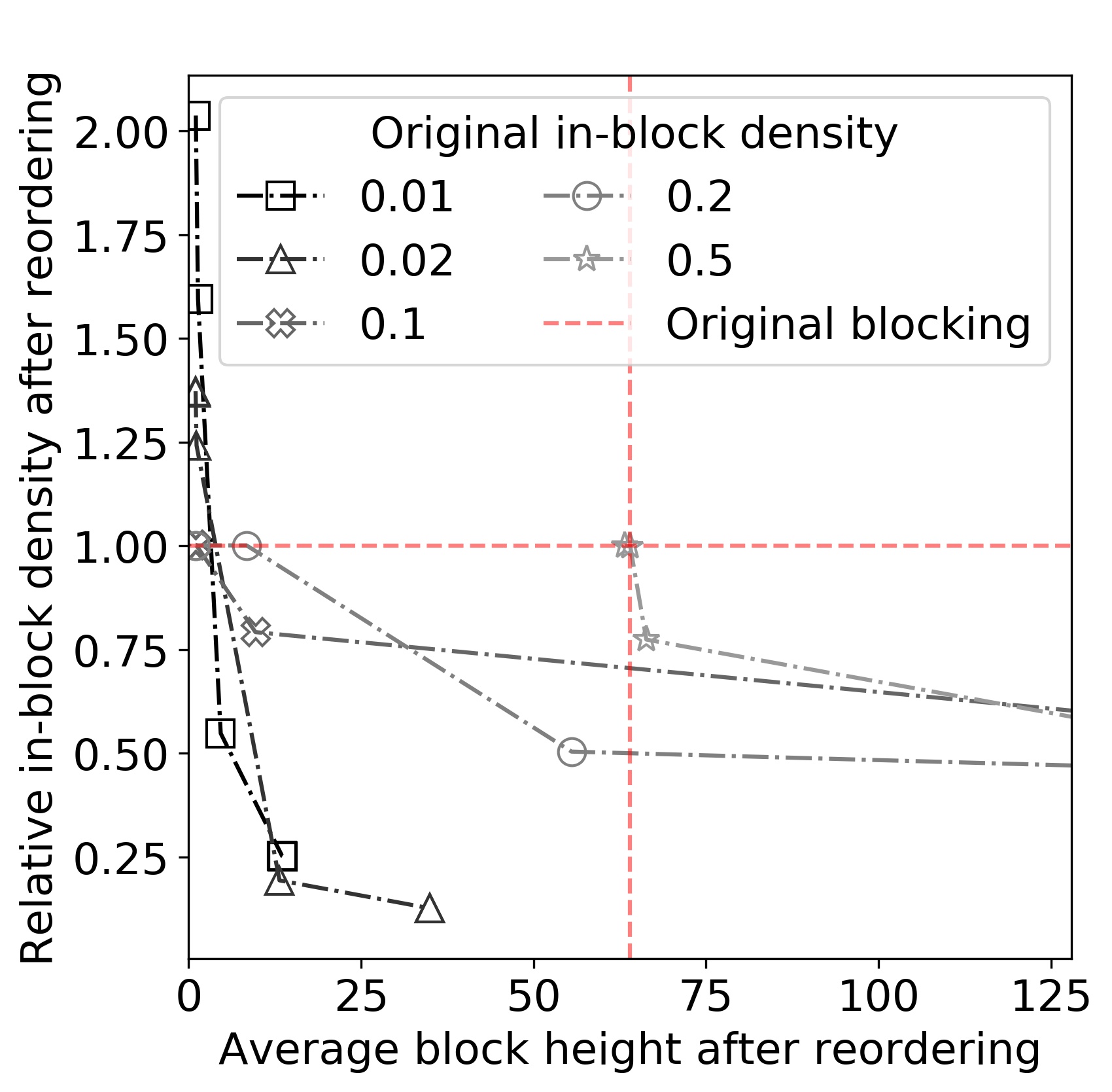}
    \caption{Blocking curves for SA on five synthetic blocked matrices, showing the blocking trade-off between block size (x-axis) and in-block density (y-axis). The $8192 \times 8192$ matrices are divided in blocks of size $64 \times 64$, with $10\%$ randomly chosen nonzero blocks. Each matrix has a different in-block density. Notice that SA can recover the original blocking (red cross) only for a very dense matrix.}
    \label{fig:saad_motivation}
\end{figure}

\subsection{Evaluating the quality of blockings}
\label{sec:blocking-quality}
To evaluate the effectiveness of a blocking algorithm, two aspects must be considered: the size of the blocks created, and their density. Coarser blockings reduce indexing and promote regular data access, but also introduce more fill-in, i.e. treating zero elements as non-zeros. On the other hand, finer blockings are usually denser at the cost of less data locality. In general, a trade-off exists between the size and density of the blocks. Ultimately the target task --- in our case, spMM --- will determine which balance between the two properties is the most desirable. In general, a good blocking algorithm should be able to explore the density-size trade-off to attain a satisfying density for a target block size.
Both SA and 1-SA are equipped with a parameter $\tau$ to explore this trade-off. Varying $\tau$ allows us to tune the density of the blocks, and indirectly, their size.
For this reason, we will not evaluate our algorithm on single-shot blockings, but on entire \textit{blocking curves} --- that is, we will observe how different choices of $\tau$ produce blockings with different combinations of block size and in-block density.
Since we will explore 1-dimensional reordering algorithms, only the height of blocks will be allowed to change, while the width remains fixed by a predetermined partition of the columns. 
Figure~\ref{fig:saad_motivation} shows an example of a reordering curve for a direct 1-dimensional implementation of SA, where several synthetic blocked matrices have been scrambled and then blocked again. For all but the denser matrices, SA failed to recover the original blocking. As we will see in section \ref{sec: exp:reorder}, these results can be improved by tweaking SA to account for the predetermined column partition. 

\begin{table}[htb!]
\centering
\footnotesize
\sf
\begin{tabular}{@{}l|ll@{}}
\hline
\\
\multirow{11}{*}{\begin{turn}{90}\shortstack{Blocked  matrices}\end{turn}} 
& $v_i$ & \textit{The i-th row of a sparse matrix}.\\
& $v_{i,j}$ & \textit{The j-th element of row $v_i$ (usually 0 or 1)}.\\
& $\hat{v}$ & \textit{A row projected on the column partition}.\\
& $k_i$ & \textit{Number of nonzero in row i}. \\
& $\Delta_W$ & \textit{Width of the column partition}\\
& $\Delta_H$ & \textit{Height of the blocks}.\\
& $\delta$  & \textit{Overall matrix density}.\\
& $\theta$ & \textit{Fraction of nonzero blocks}.\\
& $\rho$ & \textit{The overall density inside nonzero blocks} \\
& $\rho',\Delta_H'$ & \textit{As above, but after blocking} \\
\\
\hline
\\
\multirow{4}{*}{\begin{turn}{90}\shortstack{Blocking}\end{turn}} 
                   & \textit{SA} & Saad's reordering algorithm~\cite{Saad}
                   \\
& \textit{1-SA} & Our reordering algorithm, as detailed in Algorithm \ref{algo:sparta} \\
                   
& \textit{$\tau$} & Similarity threshold in SA and 1-SA\\
& \textit{h(v)} & The hash for vector v, evaluated as in Eq.~\ref{hash}
\\
\\
\hline

\end{tabular}
\caption{Notation used in this paper.}
\label{tab:symbols_conv}
\end{table}

\section{Methodology and Algorithms}
\label{sec: methods-and-algos}
In this section, we present our one-dimensional modification of SA, called 1-SA. Its objective is to find a partition of the rows that clusters the nonzero entries of the matrix into blocks.   

The basic idea is to partition the columns in advance, and then reformulate the hash function \eqref{hash} and the similarity function \eqref{jaccard_sim} to work on the "quotient" rows determined by the column partition. The advantage of this approach is that it is able to pair rows whose elements belong to spatially close columns. Moreover, it can control exactly the ordering and blocking of the second dimension by choosing the columns partition, which allows the blocking to be better tuned to the multiplication task at hand.
We present a detailed analysis of our algorithm, along with some variations thereof, in section~\ref{reorder-algo}. 

\subsection{The reordering and blocking algorithm}
\label{reorder-algo}

We now detail how to extend SA to obtain a 1-dimensional blocking algorithm that fits our needs. We will refer to this algorithm as 1-SA.
First, a partition $Q$ for the columns is chosen. From now on we will consider the case of a regular partition with blocks of width $\Delta_W$. The choice of $\Delta_W$ can be guided by knowledge about either the hardware or the matrix block structure. In the multiplication experiments of Section~\ref{sec: exp:multiplication}, we will always select $\Delta_W$ to be a multiple of 64, for the benefit of the Tensor Cores-based multiplication routine. 
Given a column partition $Q = {Q_1 .... Q_K}$, we consider the $Q-$quotient version of a row $v_i$, namely a K-dimensional binary vector $\hat{v}_i$ such that 
\begin{equation}
\hat{v}_{i,j} = 
     \begin{cases}
       \text{1} \quad \text{if there is a } v_{i,k} \neq 0 \; \text{ in } Q_j\; \\
       \text{0} \quad \text{otherwise} \\ 
     \end{cases}.
 \label{compressedrow}
 \end{equation}
That is, the $k$-th element of $\hat{v}$ is nonzero if and only if $v$ has at least a nonzero in the positions corresponding to $Q_k$. 

We then apply algorithm~\ref{algo:hash} to the quotient rows: for each quotient row $\hat{v}$ (as in~\ref{compressedrow}) a hash number is calculated as follows:
\begin{equation}
    \mathrm{h}(\hat{v}) = \sum_{\{i \mid \hat{v}_i \neq 0\}}{i}.
    \label{hashpartition}
\end{equation}
Note that $\mathrm{h}(\cdot)$ can be replaced by any hash function with stronger theoretical guarantees.
 After checking for collisions, identical (quotient) rows are grouped together. Note that two rows with different nonzero patterns may be grouped together, as long as their quotient pattern is the same. From now on, the rows have been \textit{compressed}. A compressed row may represent any number of identical (quotient) rows. As in SA, the compression step is not necessary, but it will speed-up the rest of the computations. Without loss of generality, from now on we will assume the rows of our CSR matrix to represent any number of identical rows.
 
After compressing identical rows, we create blocks by merging similar rows, in a fashion similar to SA. We sketch our blocking algorithm in Algorithm~\ref{algo:sparta}. 
The main loop in lines 4-18 starts by creating a group from the first not yet merged row (lines 7-8). Then, through the loop in lines 9-16, we iteratively try to merge compressed rows to the current group.
As in SA, we check (line 11) if the similarity of the row with the current group exceeds a certain threshold $\tau$. Unlike SA, we evaluate a more general \textit{merge condition} before deciding whether to merge a row to the current group. In Section~\ref{sec:bounds} we show that some theoretical bounds on reorder quality can be enforced if the merging criterion involves a minimum Jaccard similarity with an additional control on how a group of rows can grow.

Finally, when a row is merged into a group (line 12), we update the block-pattern of the group to the bitwise-OR of the two merged patterns (line 13). The motivation behind this choice is that, from the moment a nonzero element is added to a group, all zeros in the corresponding group of columns will be treated as nonzeros.
This loop continues until each row has been assigned to a group. The output of the algorithm is a partition of the rows in groups. Together with the columns partition Q, this determines a blocking of the matrix. Converting the matrix to the VBR formats using these rows and columns partitions allows for saving only the nonzero blocks of the matrix during the rest of the procedure. We provide a visual representation of 1-SA in Fig \ref{fig:algo_explain}. 

\begin{figure}[h]
    \centering
    \begin{subfigure}[t]{0.2\textwidth}
        \centering
        \includegraphics[width = \linewidth,trim={11cm 3.6cm 11cm 1cm},clip]{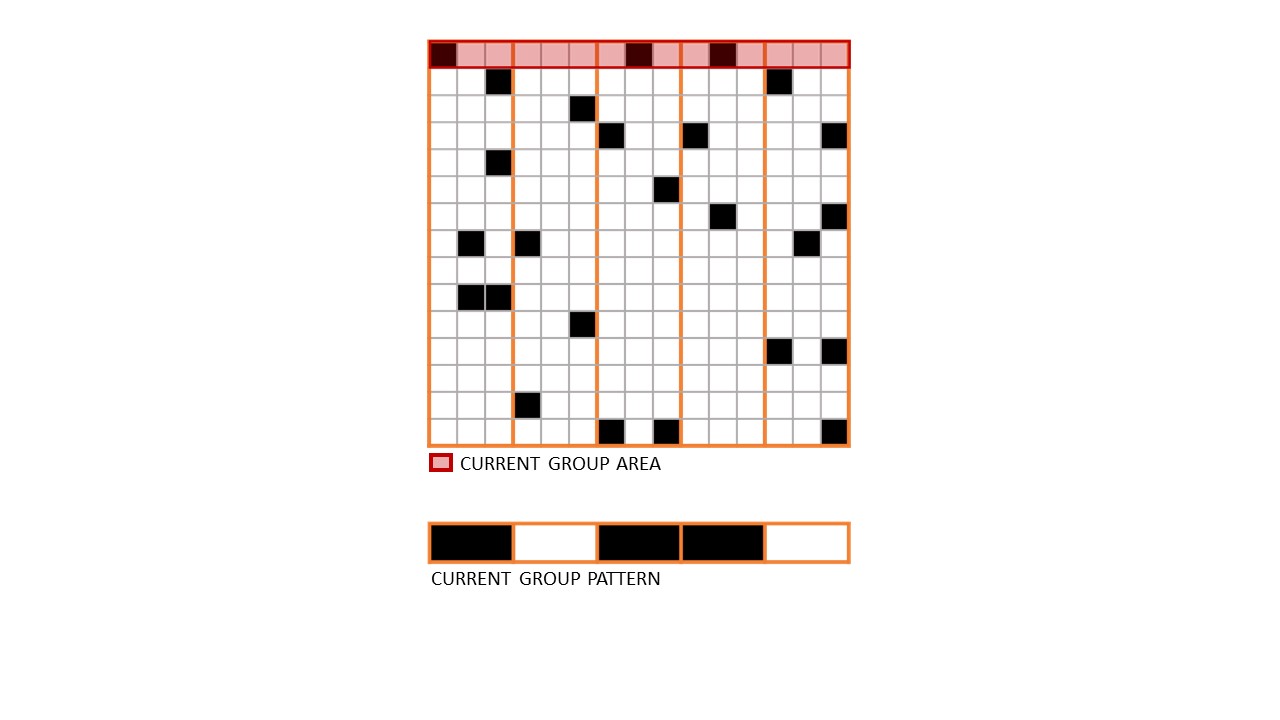}
        \caption{}
    \end{subfigure}%
    \begin{subfigure}[t]{0.2\textwidth}
        \centering
        \includegraphics[width = \linewidth,trim={11cm 3.6cm 11cm 1cm},clip]{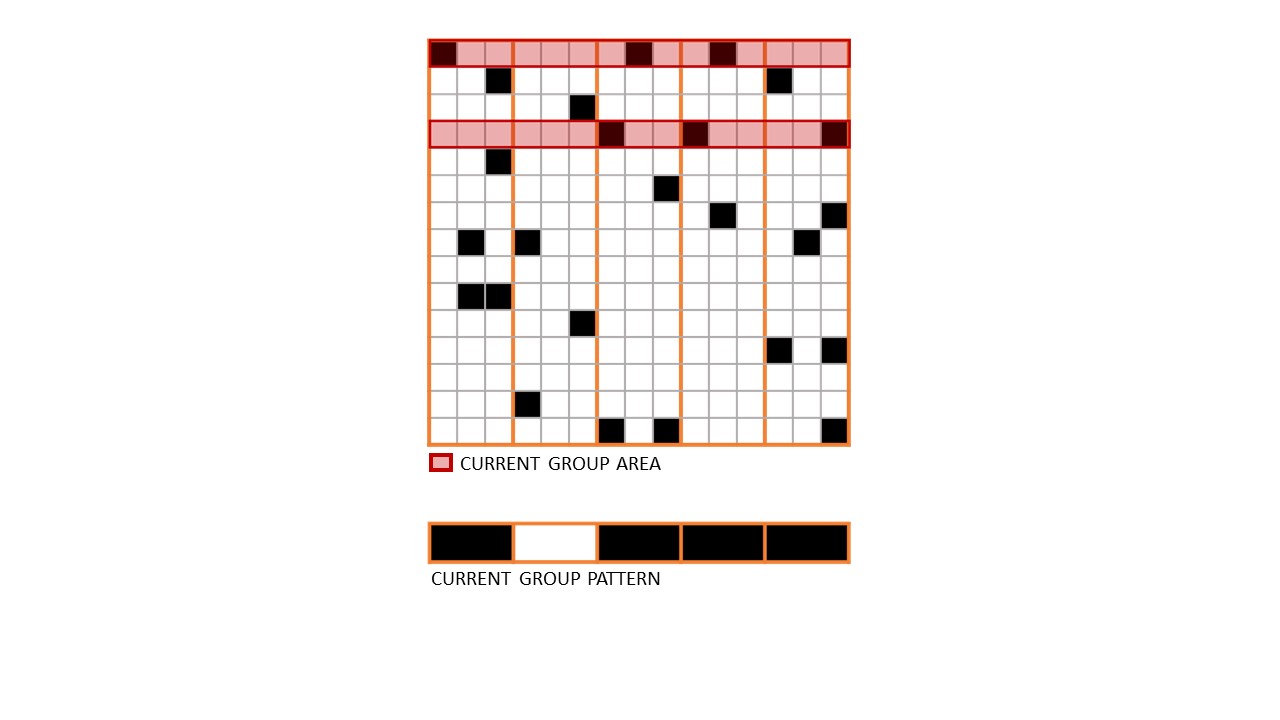}
        \caption{}
    \end{subfigure}
    \begin{subfigure}[t]{0.2\textwidth}
        \centering
        \includegraphics[width = \linewidth,trim={11cm 3.5cm 11cm 0},clip]{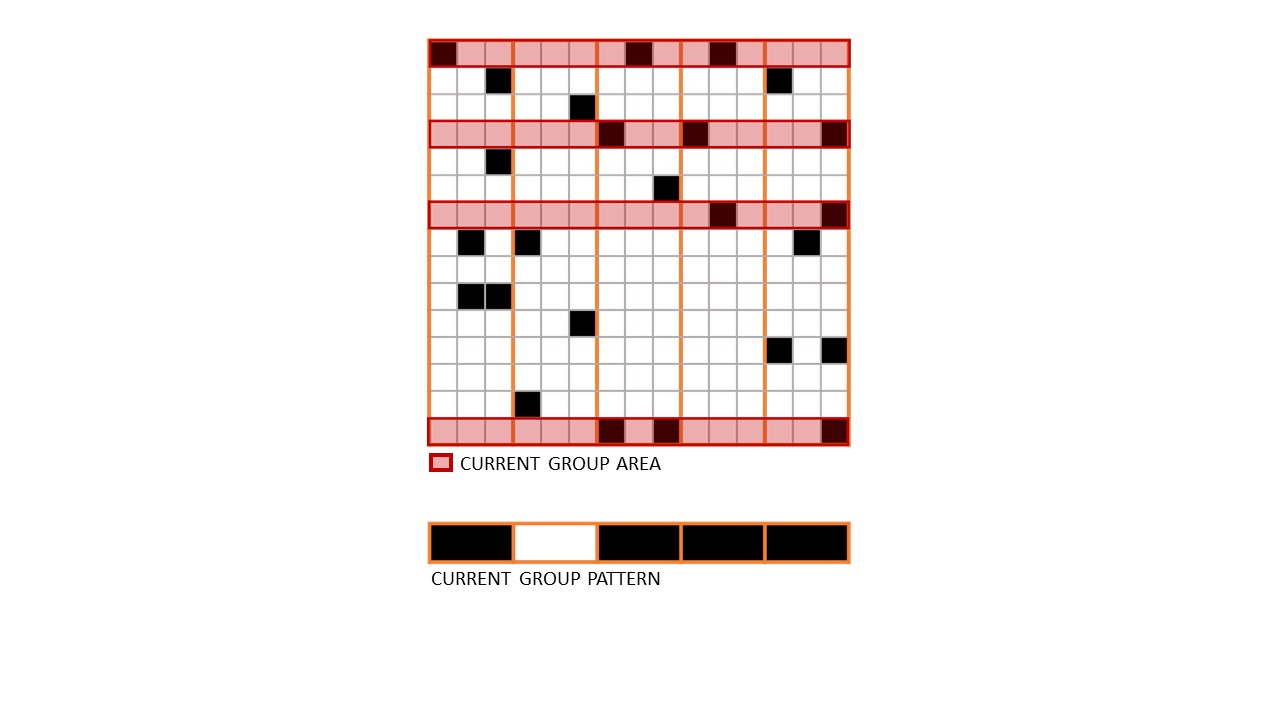}
        \caption{}
    \end{subfigure}
    \begin{subfigure}[t]{0.2\textwidth}
        \centering
        \includegraphics[width = \linewidth,trim={11cm 3.5cm 11cm 0},clip]{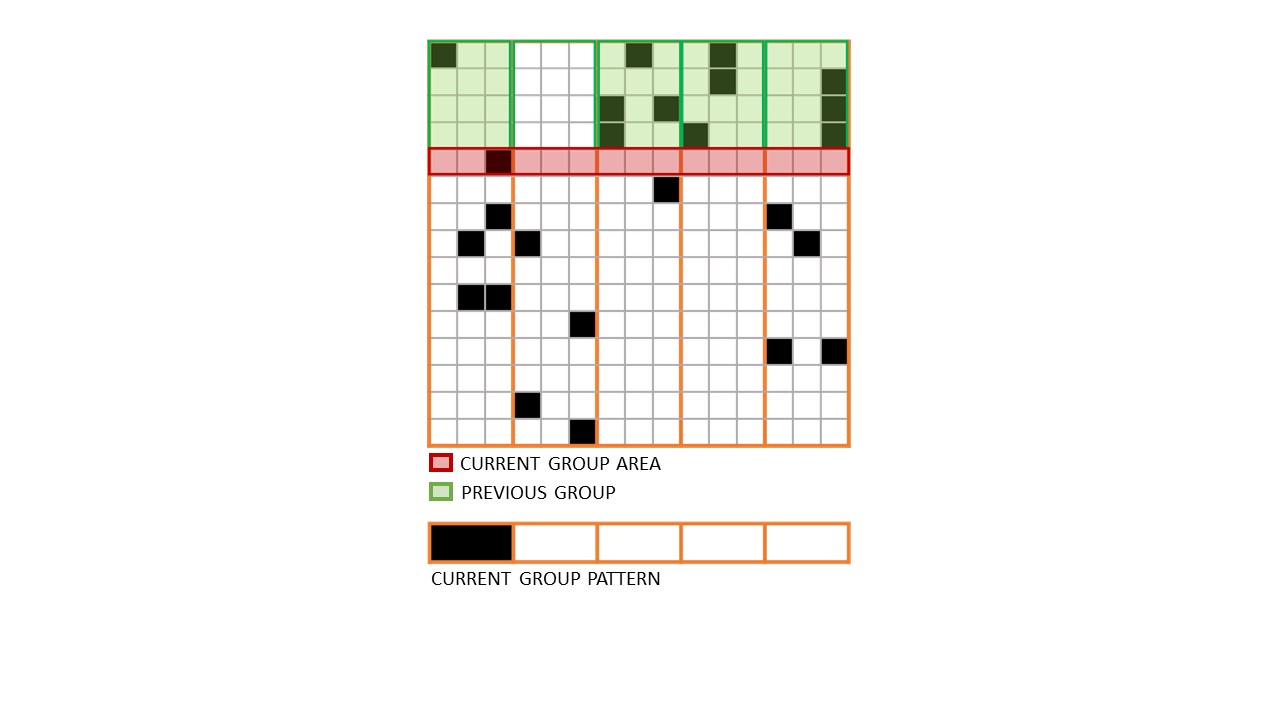}
        \caption{}
        \label{subfig: saad-points}
    \end{subfigure}

    \caption{Visualisation of the 1-SA blocking algorithm on a $ 15\times 15$ sparse matrix with $\Delta_W = 3$ and $\tau = 0.5$. In (a) the first row is selected to create the first block (red highlight). The current group pattern is thus set to the projection of the first row. Subsequent rows will be compared to this pattern, and merged if they satisfy the merging criterion. In (b), the first suitable row is discovered. It is merged to the current group, and the pattern is updated accordingly. The algorithm continues until all rows have been evaluated and possibly merged (c). Then (d) a new group is created from the next merged row, and process repeats. Note that already grouped rows (green highlight) will not be evaluated again, and will create a row of blocks in the final matrix. This fact is visualized by partially reordering the rows in (d). In practice, the rows will be only reordered when all groups have been created.}
    \label{fig:algo_explain}
\end{figure}

In summary, we propose the following modifications to obtain 1-SA from SA:
\begin{itemize}
    \item a column partitioning, which allows the algorithm to recognize rows with spatially close entries;
    \item a pattern update after each merge, which allows for better similarity evaluations at each step;
    \item a two-steps merging criterion, which ensures theoretical bounds on the final density.

\end{itemize}

\begin{algorithm}
	\caption{1-SA algorithm}
		Input: A CSR matrix, a merge condition function. \\
		Output: a mapping from rows to groups
	\begin{algorithmic}[1]
	\For{i = $1,\ldots, N$}
		    \State $\mathrm{group}[i] = -1>$
	\EndFor
	\For{$i = 1,\ldots,N$}
	\If{$\mathrm{group}[i] \neq -1$} \textbf{skip}
	\EndIf
	\State group$[i] = i$;
	\State ThisPattern = $\hat{v}_i$
	    \For{$j = i+1, \ldots, N$}
	        \If {$\mathrm{group}[j] = -1$}
    	        \If {MergeCondition(thisPattern, $\hat{v}_j$)}
    	            \State $\mathrm{group}[j] = i$
    	            \State thisPattern = OR(thisPattern, $\hat{v}_j$)
    	        \EndIf
	        \EndIf
	    \EndFor
	\EndFor
	\end{algorithmic} 
	\label{algo:sparta}
\end{algorithm}

\subsection{Reordering with theoretical bounds}
\label{sec:bounds}

Merging rows while the similarity between the pattern and the new row is at least $\tau$ often suffices for constructing sufficiently dense blocks from an experimental point of view. 
However, there can be pathological matrices that can generate sparse blocks with densities arbitrarily close to 0.
Consider the following example. Let $\ell>0$ be an arbitrary value and let $\tau\geq 0.5$ the Jaccard similarity threshold. Consider the following set of $\ell+\ell^{1/4}$ rows $v_0, \ldots v_{\ell+\ell^{1/4}-1}$: row $v_i$ for $i\in[0,\ell)$ has one nonzero entry in the first column and zeros otherwise; row $v_{\ell+j}$ for $j\in [0,\ell^{1/4})$ has nonzero entries in the first $j$ columns and zero otherwise. 
By merging rows in the order $v_0, \ldots v_{\ell+\ell^{1/4}-1}$,  we get that the similarity is 1 up to row $v_{\ell-1}$ and then it is in the range $[0.5,1)$. For $\tau\geq 0.5$, we then get a block of size $(\ell+\ell^{1/4})\times (\ell^{1/4})$ with $\ell+\ell^{1/4}(\ell^{1/4}+1)/2$ nonzero entries: the density is then $\BT{1/\ell^{1/4}}$. 

We now propose a stronger merging condition with theoretical guarantees on the density of the output blocks.
Consider a group of rows with pattern $p$ and let $v_0$ be the first row added to the group; then a new row $v$ is added to the group if: 
\begin{enumerate}
\item The similarity between the new row $v$ and the pattern $p$ is above the threshold $\tau$; 
\item The number $\lambda$ of nonzeros in the new pattern $p'=OR(p,v)$ is at most $\lambda_0/(1-\tau/2)$, where $\lambda_0$ is the number of nonzeros in $v_0$.
\end{enumerate}
We get the following result for Jaccard similarity.

\begin{theorem}\label{thm:density}
Let $G$ be a group of rows $v_0,\ldots v_{h-1}$, merged by the algorithm in this order using the aforementioned merging condition with Jaccard similarity threshold $\tau$. Let $\lambda_0$ be the number of nonzeros in $v_0$, and let $\lambda$ be the total number of nonzero in the final pattern.
If $\lambda \leq \lambda_0/(1-\tau/2)$, then the density $\rho_G$ of $G$ after removing empty columns is at least $\tau/(2\Delta_W)$.
\end{theorem}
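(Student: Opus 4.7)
The plan is to lower-bound the nonzero count of $G$ and divide by the block's area. In VBR form, $G$ has $h$ rows and width $\lambda\Delta_W$ (one width-$\Delta_W$ column block per nonzero of the final pattern $p_{h-1}=\hat{v}_0\vee\cdots\vee\hat{v}_{h-1}$); removing further empty columns can only shrink the width and raise the density, so it suffices to prove $\mathrm{nnz}(G)\geq h\lambda\tau/2$, after which dividing gives $\rho_G\geq(h\lambda\tau/2)/(h\lambda\Delta_W)=\tau/(2\Delta_W)$. A preliminary observation is that $\mathrm{nnz}(v_i)\geq |\hat{v}_i|$, because each nonzero of the quotient row $\hat{v}_i$ witnesses at least one actual nonzero of $v_i$ inside the corresponding $\Delta_W$-wide column block, so the task reduces to lower-bounding $\sum_i|\hat{v}_i|$.

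I would next extract a uniform per-row bound from the Jaccard clause of the merge condition. For $i\geq 1$ we have $|p_{i-1}\cap\hat{v}_i|\geq\tau|p_{i-1}\cup\hat{v}_i|=\tau|p_i|$, and since $\hat{v}_i\supseteq p_{i-1}\cap\hat{v}_i$,
\begin{equation*}
|\hat{v}_i|\;\geq\;\tau|p_i|\;\geq\;\tau|p_0|\;=\;\tau\lambda_0,
\end{equation*}
using the monotonicity $p_i\supseteq p_0$. Together with $|\hat{v}_0|=\lambda_0$, summing over $i$ yields $\mathrm{nnz}(G)\geq\sum_i|\hat{v}_i|\geq\lambda_0\bigl(1+(h-1)\tau\bigr)$. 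Finally, the size clause of the merge condition, applied to the final pattern, gives $\lambda_0\geq(1-\tau/2)\lambda$, so the theorem reduces to the purely algebraic inequality
\begin{equation*}
(1-\tau/2)\bigl(1+(h-1)\tau\bigr)\;\geq\;h\tau/2\qquad\text{for all }\tau\in[0,1],\;h\geq 1,
\end{equation*}
whose left-hand side minus right-hand side factors as $(1-\tau)\bigl[(1-\tau/2)+h\tau/2\bigr]$, manifestly nonnegative (and equal to zero only at $\tau=1$).

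The one genuinely nontrivial step is the per-row bound: the Jaccard inequality yields $|\hat{v}_i|\geq\tau|p_i|$, but $|p_i|$ varies along the merge and could in principle grow well beyond $\lambda_0$. I would deliberately throw away this stronger information and use only $|p_i|\geq\lambda_0$, because the size clause of the merge condition is what guarantees that the final $\lambda$ also cannot be much larger than $\lambda_0$ -- indeed, the factor $(1-\tau/2)$ it provides is exactly what the closing algebraic inequality consumes. Without this clause, the pathological $(\ell+\ell^{1/4})\times\ell^{1/4}$ example preceding the theorem shows that the density can be driven arbitrarily close to $0$; with it, the two bounds interlock with no slack for $\tau=1$, and everything else is bookkeeping.
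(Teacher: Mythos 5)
Your proof is correct, and it takes a genuinely different route from the paper's. The paper works from the complementary side: it upper-bounds the number of zeros $Z$ in the block by decomposing the zero positions of each row $v_i$ into the columns already filled by earlier rows but missing from $v_i$ (the set $\Gamma_i$) and the columns that will only be filled by later rows (the set $\Lambda\setminus P_i$), and then controls $\gamma_i=|P_{i-1}\setminus V_i|$ via the symmetric-difference consequence of the Jaccard condition, $\gamma_i\leq(1-\tau)|P_{i-1}\cup V_i|$. You instead lower-bound the nonzeros directly, using the cleaner observation that $|\hat{v}_i|\geq|P_{i-1}\cap V_i|\geq\tau|P_i|\geq\tau\lambda_0$, and close with an explicit algebraic inequality in $h$ and $\tau$. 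Your route avoids the $\Gamma_i/\Lambda_i$ bookkeeping entirely and, as a bonus, yields the sharper intermediate bound $\rho_G\geq(1-\tau/2)\bigl(1+(h-1)\tau\bigr)/(h\Delta_W)$, which strictly exceeds $\tau/(2\Delta_W)$ for every $\tau<1$; the paper's accounting lands on $\rho_G\geq(\lambda_0/\lambda-1+\tau)/\Delta_W$, a bound of a different shape that also collapses to $\tau/(2\Delta_W)$ under the hypothesis $\lambda\leq\lambda_0/(1-\tau/2)$. Your handling of $\Delta_W>1$ (each quotient nonzero witnesses at least one true nonzero in a width-$\Delta_W$ column block, and deleting empty columns only raises the density) matches the paper's one-line reduction in substance. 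The only stylistic caveat: you silently identify $\mathrm{nnz}(v_i)$ with a quantity bounded below by $|\hat{v}_i|$ and the block width with $\lambda\Delta_W$ before empty-column removal, which is fine, but worth stating as explicitly as you did in the opening paragraph if this were to replace the paper's proof.
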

\begin{proof}
We initially assume $\Delta_W=1$.
Consider a group $G$ of rows after the reordering, and let $\rho_G$ be its density after removing all empty columns. 
For the sake of notational simplicity, let us assume that it consists of $h$ rows $v_0,\ldots v_{h-1}$, added by the algorithm in this order. 
For $0\leq i < h$, let $p_i$ be the pattern after adding row $v_i$,  that is $p_i$ is the element-wise OR of $v_0,\ldots v_{i}$; we also  let $V_i$ and $P_i$ represent the set containing column indexes of nonzero entries in $v_i$ and $p_i$  (for convenience, we set $P_{-1}=\emptyset$).
For $i\geq 0$, we define $\Lambda_i = V_i \setminus P_{i-1}$ and $\Gamma_i = P_{i-1}  \setminus V_i$.
The set $\Lambda_i$ contains the column indexes that are added in $G$ by row $v_i$: for each $k\in \Lambda_i$, we have that column $k$ contains only zero entries  in row $v_j$ for any $j<i$.
On the other hand, the set $\Gamma_i$ contains the column indexes corresponding to zero entries in $v_i$, but for which there exists at least a row in $v_0,\ldots v_{i-1}$ with a nonzero entry.
We define  $\gamma_i=|\Gamma_i|$ and $\lambda_i=|\Lambda_i|$, and $\Lambda=P_{h-1}$ (i.e., the set of all columns in $G$ with at least a nonzero entry).
Note that $\sum_{i=0}^{h-1}\lambda_i  = \lambda$.

We observe that the zero entries in row $v_i$ must be either in empty columns (which are removed in the density estimate), in columns in $\Gamma_i$ (i.e., the columns with nonzero entries in the preceding $i-1$ rows, but not in $v_i$) or in $\Lambda\setminus P_i$ (i.e., the columns with nonzero entries in the rows that will be added by subsequent rows but not in the first $i$ rows). Then, the total number $Z$ of zeros in $G$ after removing all empty columns is:
\begin{align*}
Z & = \sum_{i=0}^{h-1} \left(|\Gamma_i |+ | \Lambda\setminus P_i |\right) \leq \sum_{i=0}^{h-1} \left(\gamma_i + \lambda  -\lambda_0\right)\\
&\leq (\lambda-\lambda_0) h + \sum_{i=0}^{h-1} \gamma_i.
\end{align*}
By construction, we add row $v_i$ if $Jaccard(P_{i-1}, V_i)\geq \tau$ and hence 
$$\gamma_i  = |P_{i-1} \setminus  V_i| \leq |V_i\setminus P_{i-1}| +  |P_{i-1} \setminus V_i | \leq (1-\tau) |P_{i-1} \cup V_i|.$$
Since $|P_{i-1} \cup V_i| = \sum_{j=0}^{i-1} |\Lambda_j| = \sum_{j=0}^{i-1} \lambda_j$, we get:
\begin{align*}
Z & \leq  h(\lambda-\lambda_0) +  (1-\tau) \sum_{i=0}^{h-1}  \sum_{j=0}^{i-1} \lambda_j \\
& \leq  h(\lambda-\lambda_0) +  (1-\tau) h \sum_{j=0}^{h-1}   \lambda_j \\
&\leq  h(\lambda-\lambda_0) + (1-\tau)h \lambda \\
&\leq h\left((2-\tau) \lambda- \lambda_0\right).
\end{align*}
Since the total number of entries in $G$ is $h \lambda$, the density $\rho_G$ is then
$$
\rho_G \geq 1-\frac{\left((2-\tau)\lambda - \lambda_0\right)}{\lambda}.
$$
Since $\lambda \leq   \lambda_0/(1-\tau/2)$,  we get $\rho_G \geq \tau/2$.

For the case $\Delta_W>1$, we perform the previous analysis on the quotient rows. 
Since each nonzero entry in the compressed matrix corresponds to at least one nonzero and at most $\Delta_W-1$ zeros, we get the claimed result. 

\end{proof}

\subsection{Analysis of algorithms}
\subsubsection{Cost of 1-SA}
We consider a sparse matrix stored in CSR format. Without loss of generality, we ignore the actual values of the entries and only focus on the structure of the nonzero elements (i.e., we replace the nonzero values with 1).

First, we analyze Algorithm~\ref{algo:sparta} ignoring the pattern update (step 13).
The running time of this simplified algorithm is $O(N^2k)$ for a $N\times N$ sparse matrix with, at most, $k$ nonzero elements per row. The worst case happens when no two rows are merged, so that $N(N-1)/2$ row comparisons have to be carried on. For both Jaccard and cosine similarity, comparing row $v_i$ with row $v_j$ takes $O(k_i + k_j)$ operations. Comparing row $v_i$ with every other row $v_j$ with $j > i$ takes $O((N-i)k_i + k_{i+1} +... + k_N)$ steps, from which the overall $O(N^2k)$ bound follows.

If the pattern update (step 13) is used, the pattern may grow during a single pass through steps 4-17. When a row $v_j$ is merged to the current pattern, the latter grows at most by $k_j$ entries, so that the cost of the remaining comparisons becomes $(N-j)(k + k_j) + k_{j+1} +... + k_N$ assuming no other row is merged. Yet, since row $v_j$ has now been merged, the algorithm will not try to build a group out of it (see step 5), which would have a cost proportional to $(N-j)(k_j) + k_{j+1} +... + k_N$. This is exactly the additional cost incurred by the pattern update. We conclude that switching on the pattern update does not influence the worst-case time analysis.

\subsubsection{Cost of matrix multiplication with reordering}
In this section, we provide an upper bound on the cost of multiplying a sparse matrix reordered with the strategy in 
Section~\ref{sec:bounds} with a dense matrix. 
For the analysis, we use the $(m,\ell)$-TCU model studied in~\cite{CSV21,AS20}, where $m$ and $\ell$ are two 
parameters characterizing the tensor. 
Specifically, the $(m,\ell)$-TCU  model is a traditional RAM model featuring a tensor core unit that can natively compute the matrix multiplication between two dense $\sqrt{m}\times \sqrt{m}$ matrices in time $\BO{m+\ell}$, where $\ell$ is a latency cost.
The multiplication between a  $r\times c$ matrix and a $c \times s$ matrix requires time $\BO{r c s / m^{1/2} + cs \ell / m}$ on a $(m,\ell)$-TCU.

Consider a $N\times N$ matrix $A$ with $K$ nonzero entries. Assume that the reordering described in the previous Section~\ref{sec:bounds} gives $H$ groups and let the $i$-th group $G_i$ be a $r_i\times c_i$ matrix.
Let $B$ be a dense $N\times N$ matrix. 
We now provide an upper bound on the running time for computing $A\cdot B$ using the $(m,\ell)$-TCU: we multiply each group $G_i$ with $B$ by using the aforementioned dense matrix multiplication algorithm~\cite{CSV21}.
For simplicity, we assume $\Delta_W=1$.

\begin{theorem}
Let $A$ be a sparse $N\times N$ matrix with $K$ nonzero entries. Assume that the reordering  procedure in Section~\ref{sec:bounds}, with Jaccard similarity threshold $\tau$ and $\Delta_W=1$, gives $H$ blocks of size $r_i\times c_i$ with $i\in\{0,\ldots H-1\}$. 
Let $B$ be a dense  $N\times N$ matrix $B$.
If $r_i \geq \sqrt{m}$ for a constant fraction of the $H$ blocks, then it is possible to compute $A\cdot B$ in $$\BO{\frac{K N}{m^{1/2} \tau} + \frac{K N \ell}{m^{3/2}\tau}}$$ on the $(m,\ell)$-TCU model.
\end{theorem}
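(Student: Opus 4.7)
The plan is to multiply each reordered row-group $G_i$ of $A$ separately against the relevant rows of $B$ by invoking the dense $(m,\ell)$-TCU matrix multiplication primitive cited from~\cite{CSV21}, and then to sum the per-block costs, controlling both the work and the latency terms with the density bound of Theorem~\ref{thm:density} together with the large-block hypothesis.

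First I would write $A\cdot B = \sum_{i=0}^{H-1} G_i \cdot B_i$, where $B_i$ is the $c_i\times N$ submatrix of $B$ formed by the rows corresponding to the non-empty columns of $G_i$ (the partial products are then scattered into the appropriate rows of the output; the scatter fits inside the work term). Applying the dense TCU routine, computing $G_i \cdot B_i$ costs $\BO{r_i c_i N / m^{1/2} + c_i N\ell / m}$, and summing over $i$ reduces the proof to bounding
\[
\frac{N}{m^{1/2}}\sum_{i=0}^{H-1} r_i c_i \;+\; \frac{N\ell}{m}\sum_{i=0}^{H-1} c_i.
\]

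For the first sum, Theorem~\ref{thm:density} (instantiated with $\Delta_W=1$) gives an in-block density of at least $\tau/2$ after empty columns are removed, so $r_i c_i \leq 2 K_i/\tau$, where $K_i$ is the number of nonzeros of $A$ that fall in $G_i$. Summing and using $\sum_i K_i = K$ yields $\sum_i r_i c_i \leq 2K/\tau$, hence the first term is $\BO{KN/(m^{1/2}\tau)}$ as claimed. For the second sum, I would exploit the hypothesis that $r_i\geq\sqrt{m}$ for a constant fraction of the $H$ blocks: for any such block the density bound gives $c_i \leq 2K_i/(\tau r_i) \leq 2K_i/(\tau\sqrt{m})$, so summing the large blocks alone yields a contribution of $\BO{K/(\tau\sqrt{m})}$ to $\sum c_i$, and therefore a latency cost of $\BO{KN\ell/(m^{3/2}\tau)}$.

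The main obstacle is the contribution of the small blocks ($r_i<\sqrt{m}$): for these, the latency term $c_i N\ell/m$ does not improve with $r_i$, and the density bound on its own only gives $c_i\leq 2K_i/\tau$, which is an extra $\sqrt m$ factor too weak. This is precisely what the large-block hypothesis is there to fix: with only a constant fraction of the blocks being small, I would pack them along the row dimension into effective blocks of height $\sqrt m$ (padding with zero rows when needed) and invoke the same dense primitive, so that the padded $\sqrt m \times c_i$ multiplications enjoy the same amortized latency $c_i N \ell/m^{3/2}$ as the large ones while the padded work gets charged to the large-block work budget up to constants. Putting the two estimates together then delivers the claimed $\BO{KN/(m^{1/2}\tau) + KN\ell/(m^{3/2}\tau)}$ bound.
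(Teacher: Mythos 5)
Your proposal follows essentially the same route as the paper's proof: decompose $A\cdot B$ into per-group products $G_i\cdot B_i$ with $B_i$ the $c_i\times N$ slice of $B$, invoke the $(m,\ell)$-TCU dense primitive on each block, use Theorem~\ref{thm:density} to obtain $r_ic_i\leq 2K_i/\tau$ and (for blocks with $r_i\geq\sqrt{m}$) $c_i\leq 2K_i/(\sqrt{m}\,\tau)$, and dispose of the sub-$\sqrt{m}$ blocks by padding them to height $\sqrt{m}$ under the constant-fraction hypothesis. The one step you leave informal---why the padded small blocks can be ``charged to the large-block work budget up to constants''---is precisely the step the paper itself asserts without further justification (``without asymptotically increasing the running time''), so your argument is neither more nor less complete than the published one.
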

\begin{proof}
A constant fraction of the $H$ blocks have more than $\sqrt{m}$ rows and, by Theorem~\ref{thm:density}, their density is at least $\tau/2$ after removing the empty columns. 
We can then pad the remaining blocks to reach $\sqrt{m}$ rows and density at least $\tau/2$ without asymptotically increasing the running time.
We then assume that for all blocks $r_i\geq \sqrt{m}$ and the density is at least $\tau/2$.

Consider a block $G_i$ given by the reordering with $r_i$ rows and $c_i$ non empty columns.
By representing the group as a dense $r_i\times c_i$ matrix and by extracting the corresponding submatrix $B$ of size $c_i\times N$ in $\BO{c_i n}$ time, we can compute the product between the $G_i$ and the dense matrix in $\BO{r_i c_i N/m^{1/2} + c_i N \ell / m}$ time.
Since the density of $G_i$ is at least $\tau/2$, we get $K_i / (r_ic_i)\geq \tau/2$ where $K_i$ is the number of nonzeros in $G_i$. It then follows that $r_ic_i\leq 2 K_i/\tau$ and that $c_i\leq 2K_i/(\sqrt{m} \tau)$  since $r_i\geq \sqrt{m}$.

Therefore, the multiplication between $A$ and $B$ can be computed in 
\begin{align*}
T &= \BO{\sum_{i=0}^{H-1} \frac{r_i c_i N}{m^{1/2}} + \frac{c_i N \ell}{m}}= \BO{\sum_{i=0}^{H-1} \frac{K_i N}{m^{1/2} \tau} + \frac{K_i N \ell}{m^{3/2}\tau}}\\
&= \BO{\frac{K N  }{m^{1/2} \tau} + \frac{K N \ell}{m^{3/2}\tau}}. 
\end{align*}
\end{proof}
We observe that, when $\ell=\BO{m}$ and $\tau=\BT{1}$, we can compute $A\cdot B$ in time $\BO{KN/{m^{1/2}}}$, which is a factor $m^{1/2}$ faster than the trivial algorithm.


\section{Experiments}
\label{sec: exp}
One of the goal of this study is to evaluate the quality of our reordering and blocking algorithm. Since it is computationally infeasible to compute the optimal blocking for a generic sparse matrix (even of modest size), in Section~\ref{sec: exp:reorder} we generate synthetic matrices with a known blocking structure, scramble the order of their rows, and evaluate the ability of 1-SA to retrieve the original blocking.
With this method, we study the effectiveness of 1-SA across the landscape of blocked matrices, evaluating it on a wide range of sparsity patterns.

Then, in Section~\ref{sec: exp:multiplication}, we study the performance improvements when multiplying blocked matrices with dense ones. To this end, we evaluate the performance of a blocked multiplication kernel built using routines implemented in cuBLAS library, and compare it with the performance of the sparse-specific multiplication routine from cuSPARSE. We also validate our methodology on synthetic RMATs~\cite{rmat} and on real-world sparse matrices from the Network Repository~\cite{netref} collection.

The code is publicly available on the github repository at the following url: \url{https://github.com/LACSFUB/SPARTA.git} 

\subsection{Dataset description and generation}
\label{sec: synthetic-generation}
\label{dataset-description}
For the experiments of section~\ref{sec: exp:reorder} we generated synthetic blocked sparse matrices with varying sparsity pattern. A matrix $A(\Delta, \theta, \rho)$ is generated as follows: first, we divide the matrix in $\Delta \times \Delta$ blocks. Then, we randomly select a fraction $\theta$ of these blocks that we flag as nonzero. Finally, for each nonzero block, we select a fraction $\rho$ of entries to be nonzero. We note that, following this procedure, both the number of nonzero blocks in a blocked row and the number of nonzero entries in a row are not fixed within the same matrix.
The parameters used to generate the matrices are detailed in Table~\ref{tab:synthdata}. 

For the experiments of Section~\ref{sec: exp:multiplication}, we used synthetic RMATs with generating parameters $(0.57,0.19,0.19,0.05)$. Their size and average degree are detailed in Table~\ref{tab:synthdata}.

We also evaluated our methods on real-world matrices from the Network Repository collection. Their characteristics are described in Table~\ref{tab:realdata}.

\begin{table}

\begin{tabular}{ |p{3.5cm}||p{4cm}|}
 \hline
 \multicolumn{2}{|c|}{Synthetic blocked matrices} \\
 \hline
 Parameter & Values \\
 \hline
 Rows, columns $(M,N)$        & $2^{11},2^{12},2^{13}$ \\
 Block size $\Delta$ & $2^5,2^6,2^7,2^8$\\
 Block density $\theta$   & $0.01, 0.1, 0.2, 0.3, 0.4$ \\
 In-block density $\rho$     & $0.01, 0.02, 0.05, 0.1, 0.2, 0.5$ \\
 \hline
 \multicolumn{2}{|c|}{Synthetic RMATs} \\
 \hline
 Parameter & Values \\
 \hline
 Nodes & $2^{14}$ \\
 Avg.Degree & $8,16,32,64,128$\\
 \hline
\end{tabular}
\caption{Parameters used for the generation of the synthetic block matrices and RMATs used in the experiments.}
\label{tab:synthdata}
\end{table}
\begin{table}
\begin{tabular}[ht]{ |p{2.75cm}||p{1.3cm}|p{1.3cm}|p{1.3cm}|}
\hline
\multicolumn{4}{|c|}{Graphs from the Network Repository collection\cite{netref}} \\
\hline
Graph & Nodes & Edges & Density\\
\hline
econ-mbeacxc &
493 &
49920 &
20.539\% \\
econ-beaflw &
498 &
53403 &
21.533\% \\
econ-beacxc &
498 &
50409 &
20.326\% \\
C500-9 &
501 &
112332 &
44.754\% \\
hamming10-4 &
1025 &
434176 &
41.325\% \\
bn-mouse-retina &
1112 &
577350 &
46.691\% \\
p-hat1500-3 &
1501 &
847244 &
37.605\% \\
bio-CE-PG &
1870 &
47754 &
1.366\% \\
fb-messages &
1900 &
61734 &
1.710\% \\
C2000-5 &
2001 &
999836 &
24.971\% \\
bio-SC-HT &
2084 &
63027 &
1.451\% \\
bio-CE-GN &
2218 &
53683 &
1.091\% \\
econ-orani678 &
2530 &
90158 &
1.409\% \\
qc2534 &
2535 &
232947 &
3.625\% \\
econ-psmigr2 &
3141 &
540022 &
5.474\% \\
econ-psmigr1 &
3141 &
543162 &
5.505\% \\
bio-DR-CX &
3287 &
84940 &
0.786\% \\
bio-DM-CX &
4039 &
76717 &
0.470\% \\
bio-HS-LC &
4226 &
39484 &
0.221\% \\
bio-HS-CX &
4413 &
108818 &
0.559\% \\
bio-grid-yeast &
6008 &
313890 &
0.870\% \\
nemeth24 &
9507 &
758028 &
0.839\% \\
ted-AB &
10606 &
522387 &
0.464\% \\
bio-CE-CX &
15229 &
245952 &
0.106\% \\
bio-WormNet-v3 &
16347 &
762822 &
0.285\% \\
Si10H16 &
17078 &
446500 &
0.153\% \\
ca-AstroPh &
17904 &
196972 &
0.061\% \\
ia-retweet-pol &
18469 &
61157 &
0.018\% \\
ia-wikiquote &
21608 &
549210 &
0.118\% \\
movielens-10m &
28139 &
286740 &
0.036\% \\
SiO &
33402 &
675528 &
0.061\% \\
bmw7st-1 &
141348 &
514273 &
0.003\% \\
ca-dblp-2010 &
226414 &
716460 &
0.001\% \\
 \hline
\end{tabular}
\caption{Description of the real-world sparse matrices from the Network Repository graph collection~\cite{netref}.}
\label{tab:realdata}
\end{table}

\subsection{Experimental Setup}
We ran all experiments in Section~\ref{sec: exp:multiplication} on a Nvidia V100 GPU with 32GB of memory. We used CUDA 11.4.0 and the corresponding versions of cuBLAS and cuSPARSE.

\subsection{Reordering}
\label{sec: exp:reorder}

\subsubsection{Synthetic block matrices --- Blocking curves}
We evaluated the efficacy of our blocking algorithm on synthetic block-sparse matrices. A block-sparse matrix $A(\Delta,\theta,\rho)$ is generated by fixing the height and width of blocks $\Delta_W = \Delta_H = \Delta$, the fraction of nonzero blocks $\theta$ and the in-block density $\rho$, as detailed in~\ref{sec: synthetic-generation}.
A reordering experiment on the synthetic block matrix $A$ is performed as follows: first, the rows of the matrix are scrambled. Then, the columns are partitioned uniformly with width $\Delta$. Finally, for a given $\tau$, the blocking algorithm is run, producing the reordered and blocked matrix $A'_{\tau}$ (the subscript will be dropped from now on). 

We investigate the blocking curves (see Section~\ref{sec:blocking-quality}) produced by varying $\tau$ in our algorithm. In these curves, we observe the trade-off between the in-block density $\rho'$ against the block height $\Delta'_H$. 

Note that, since blocks in $A'$ are uneven in size and number of elements, $\rho'$ is to be intended as the overall density of nonzero elements in the nonzero area of $A'$, and $\Delta'_H$ as the average height of nonzero blocks in $A'$.

The reordering curves in Figure~\ref{fig:blocking_curves} show how our blocking algorithm can explore the trade-off between $\Delta'_H$ and $\rho'$. Each point corresponds to a different value of $\tau$, and we evaluate the density $\rho'$ at $\Delta'_H \approx \Delta$ to gauge the ability of our algorithm to recover the original blocking. In the rest of the curve, we observe how using lower (or higher) $\tau$ can trade in-block density to increase the block-height (or vice-versa).
\begin{figure}[htb!]
    \centering
    \includegraphics[width = 0.8 \linewidth]{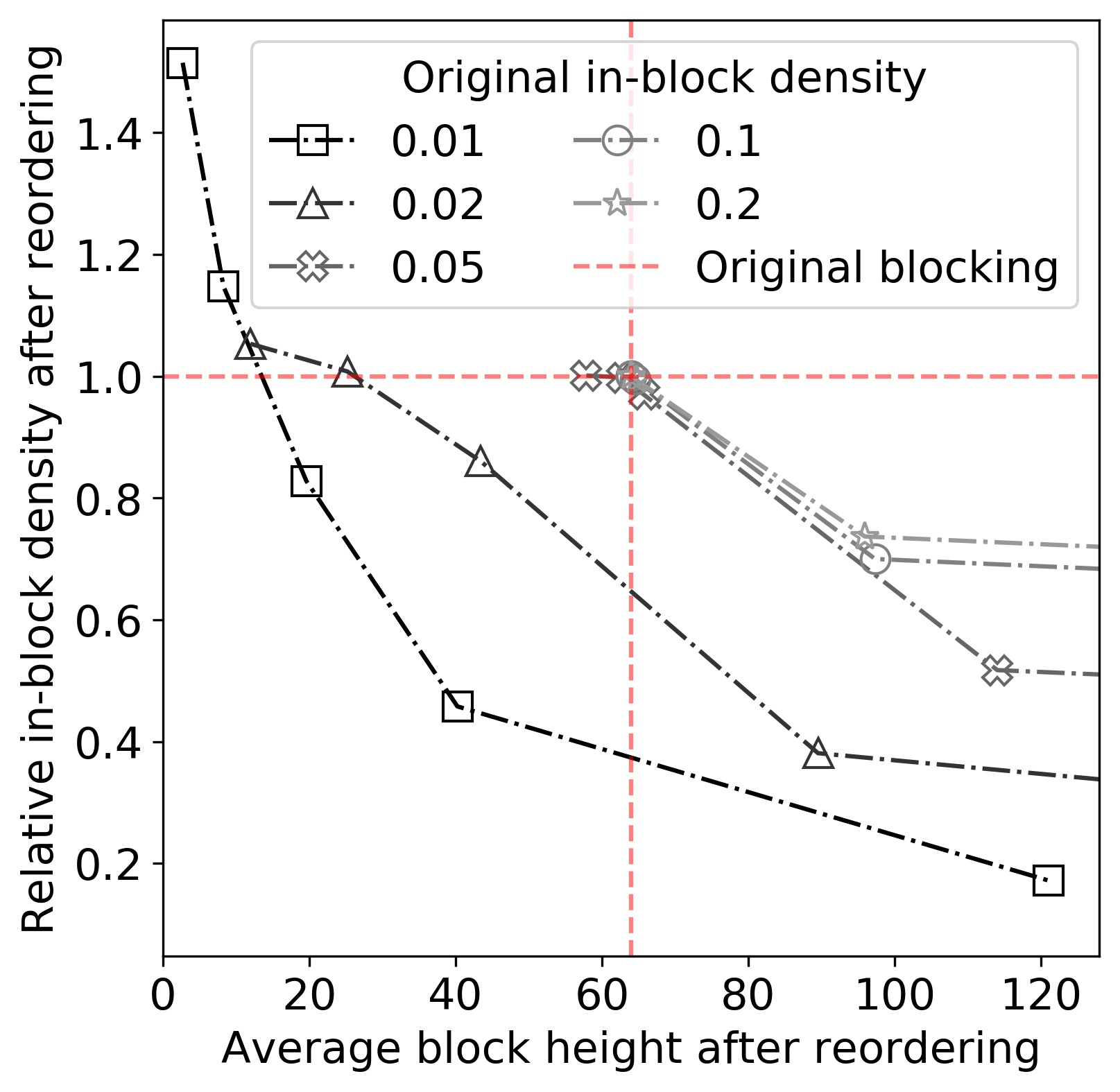}
    \caption{Blocking curves for 1-SA on five synthetic blocked matrices, showing the blocking trade-off between block size (x-axis) and in-block density (y-axis). The $8192 \times 8192$ matrices are divided in blocks of size $64 \times 64$, with $10\%$ randomly chosen nonzero blocks. Matrices differ by in-block density (see the legend). Notice how 1-SA can recover the original blocking (red cross) for the three denser matrices, and obtain around half of the optimal density for the sparser ones.}
    \label{fig:blocking_curves}
\end{figure}

\subsubsection{Synthetic block matrices --- landscape}
To better gauge the overall efficacy of our blocking algorithms, we tested it extensively over the landscape of synthetic blocked sparse matrices. That is, we generated blocked sparse matrices from a range of $\theta$, $\rho$ and $\Delta$ values, we scrambled the order of their rows, and then we blocked them with $\tau = [0.1, 0.2, ... ,1]$ . We then selected a blocking with $\Delta'_H \approx \Delta$, and recorded the relative in-block density $\frac{\rho'}{\rho}$. Similarly, we recorded the block height $\Delta'_H$ at $\rho' \approx \rho$. These values provide a measure of how good the algorithm recovered the original blocking. A perfect recover occurs when $\rho'=\rho$ and $\Delta'_H = \Delta$. The main findings of these experiments are reported in Figure~\ref{fig:reorder-landscape}, showing how even unstructured and sparse matrices (high $\theta$, low $\rho$) can be blocked effectively by 1-SA.

\begin{figure}
    \centering
    \begin{subfigure}[t]{0.4\textwidth}
        \centering
        \includegraphics[width = \linewidth]{ 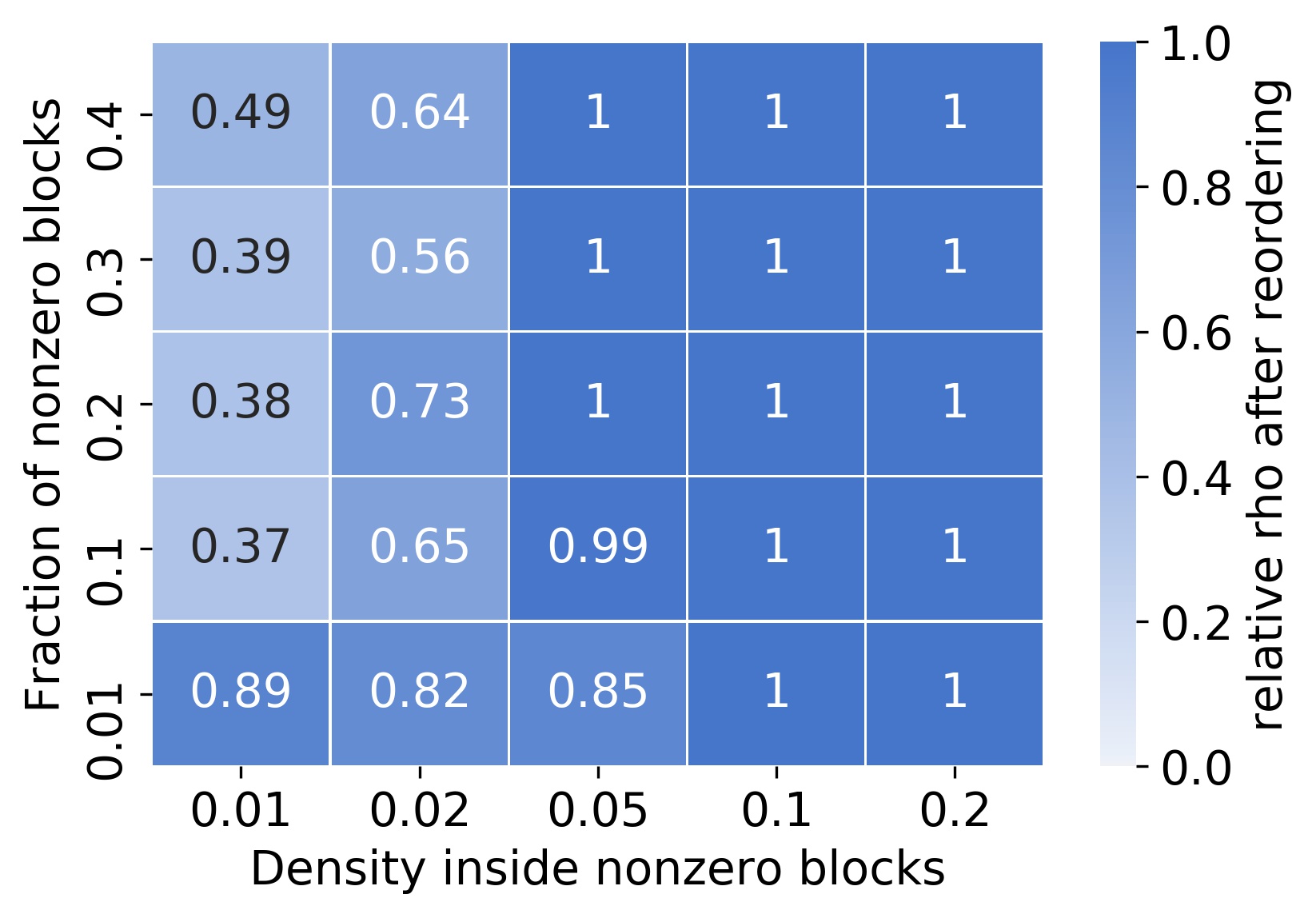}
        \caption{}
        \label{landscape-rho}
    \end{subfigure}%
    \hfill
    \begin{subfigure}[t]{0.4\textwidth}
        \centering
        \includegraphics[width = \linewidth]{ 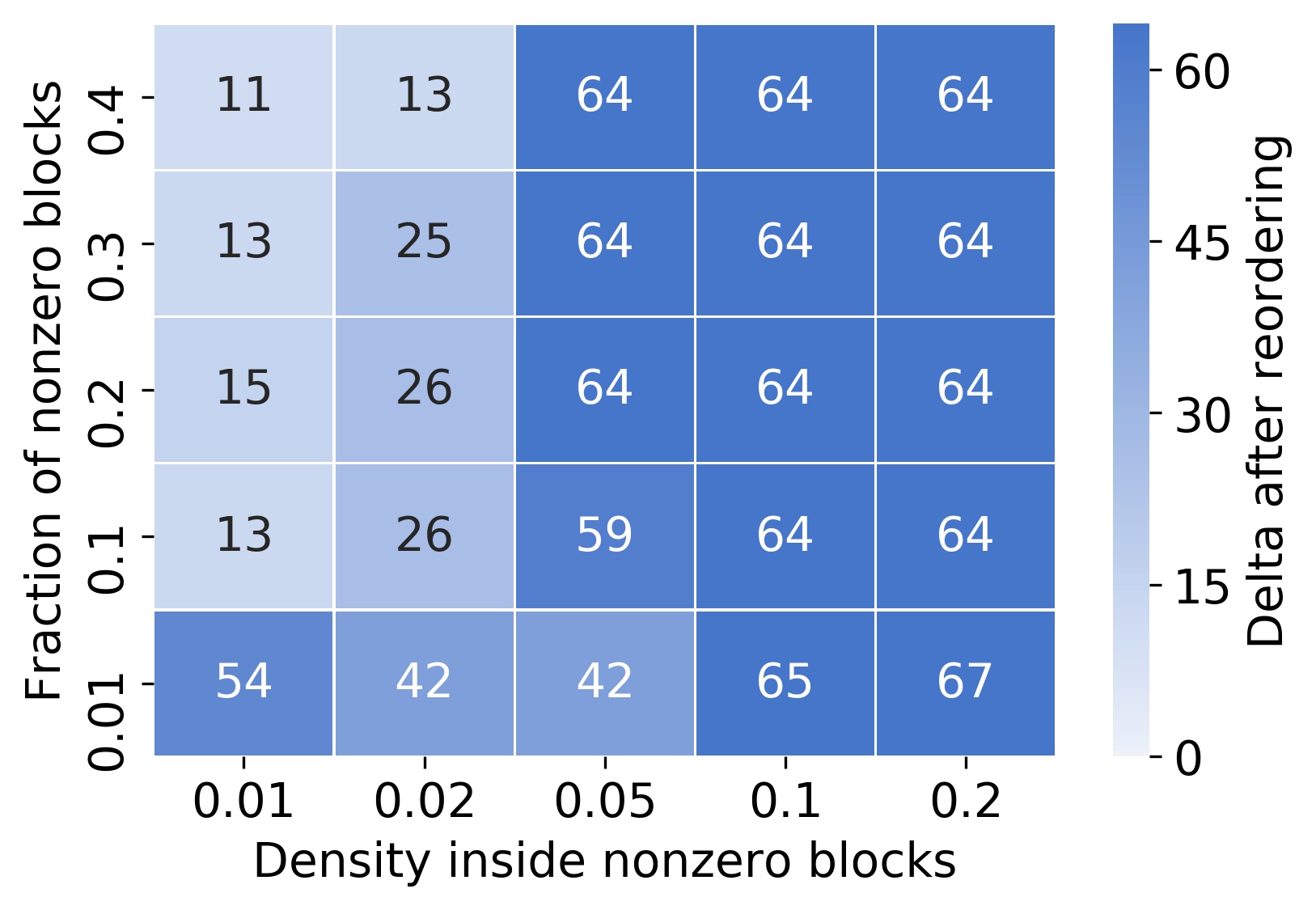}
        \caption{}
        \label{landscape-delta}
    \end{subfigure}

    \caption{Blocking quality over the landscape of synthetic $8192 \times 8192$ blocked sparse matrices. The x- and y-axis represent respectively the in-block density $\rho$ and the fraction of nonzero blocks $\theta$ of the original matrix. Values in (a) represent the post-reordering (relative) in-block density $\frac{\rho'}{\rho}$ at the original block size $\Delta'_H \approx 64$, with a value of $1$ meaning that a blocking equivalent to the original one has been found. Values in (b) represent the post-reordering block height $\Delta'_H$ at the original density $\rho' \approx \rho$.}
    \label{fig:reorder-landscape}
\end{figure}

\subsubsection{Naive implementation of SA}
Figure~\ref{fig:saad_comparison} shows the effectiveness of blocking using the naive implementation of SA (that is, without using the projected rows, the hierarchical merging, and the merge limit). As can be readily observed in the images, SA is unable to capture the block structure as effectively as our algorithm. This is because SA cannot recognize that two entries belong to the same block unless they share the same column index. 
\begin{figure}[h!]
    \centering
    \begin{subfigure}[t]{0.4\textwidth}
        \centering
        \includegraphics[width = \linewidth]{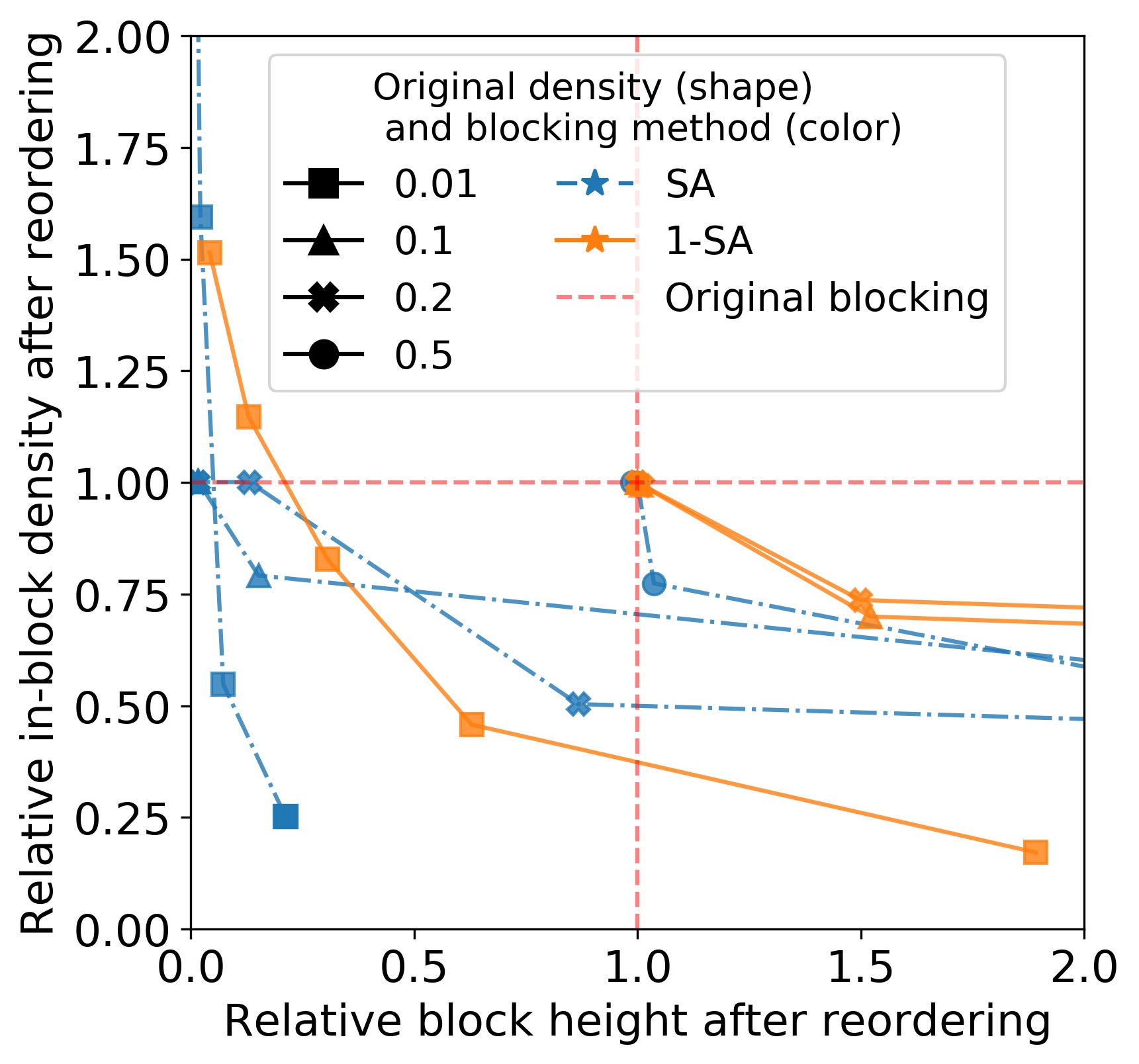}
        \caption{}
        \label{subfig: saad-comp-1}
    \end{subfigure}%
    \hfill
    \begin{subfigure}[t]{0.4\textwidth}
        \centering
        \includegraphics[width = \linewidth]{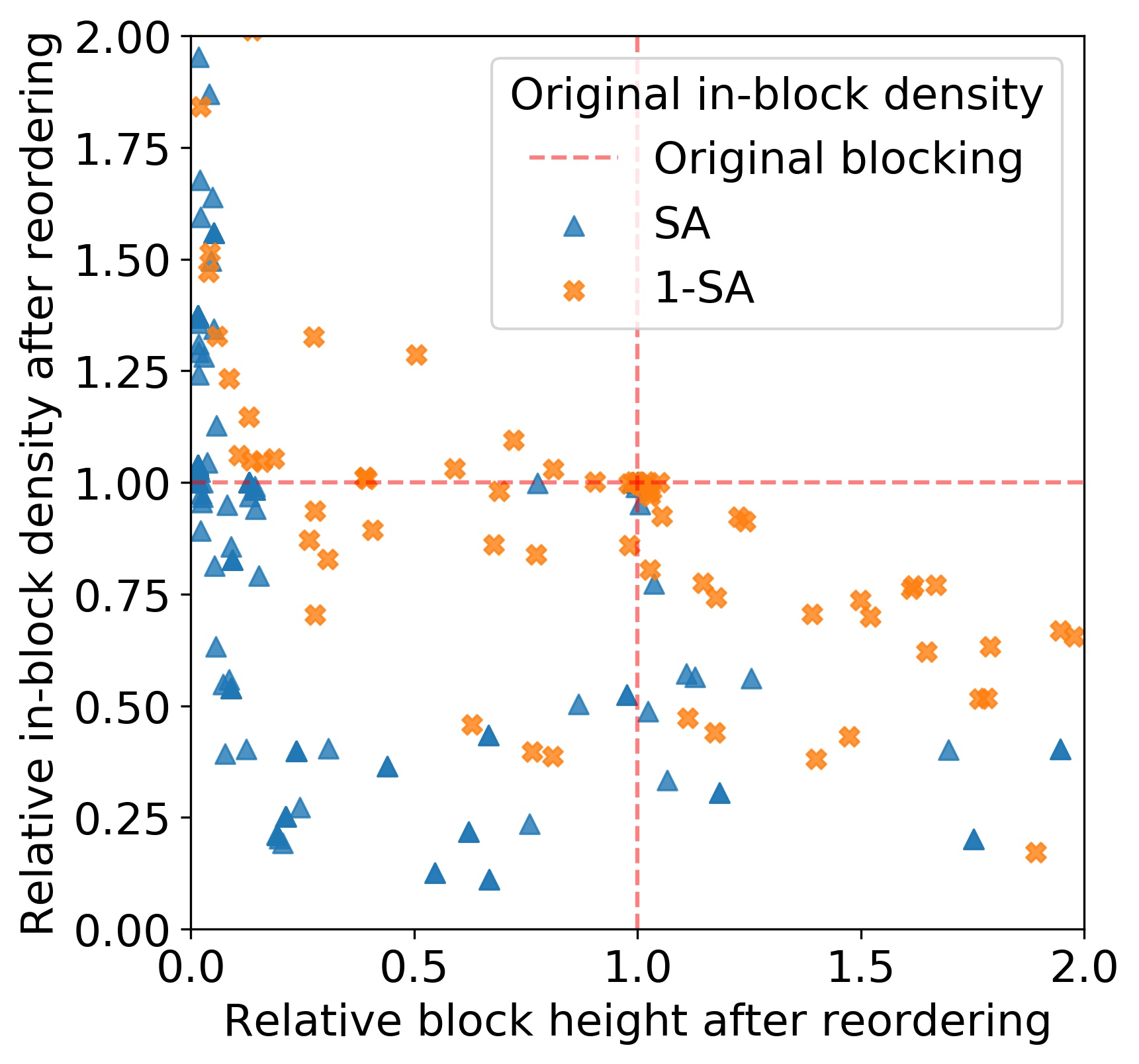}
        \caption{}
        \label{subfig: saad-comp-2}
    \end{subfigure}

    \caption{Blocking curves for 1-SA and SA. The x- and y-axis represent respectively the post-blocking relative density $\frac{\rho'}{\rho}$ and relative block height $\frac{\Delta_H'}{\Delta}$. Several blockings are obtained for each matrix by varying $\tau$. (a) shows the blocking curves for five $8192 \times 8192$ synthetic block matrices with $\Delta = 64$, and $\theta = 0.1$. (b) shows all blocking points for all the processed synthetic matrices (see Sec. \ref{dataset-description}). It can be seen that 1-SA can generate better (i.e. with higher $\rho'$ and $\Delta'_H$) blockings.}
    \label{fig:saad_comparison}
\end{figure}

\subsection{Reordering and Multiplication}
\label{sec: exp:multiplication}

\subsubsection{The multiplication routine}
We have implemented a simple, block-based, sparse-by-dense multiplication routine for matrices stored in the VBR format by resorting to the cuBLAS library. The routine uses cuBLAS streams to process
rows of blocks in parallel and uses the cuBLAS gemm routine to multiply the blocks using tensor cores. For the sake of brevity, we will refer to this block-based multiplication as VBR-cuBLAS. We note that our method is not restricted to the cuBLAS library, and can be readily adapted to any high-level parallel multiplication routine for dense matrices. As shown in the following experiments this routine, applied to the blocking generated by 1-SA, achieves considerable speed-ups against the cuSPARSE baseline. 

\subsubsection{1-SA vs cuSPARSE}
 In this experiment, we determine for which multiplication instances it is convenient to employ a blocking-based multiplication instead of a sparse-based one. We benchmarked the block-based cuBLAS multiplication routine on matrices that were blocked with 1-SA, and compared its performance with that of the cuSPARSE spmm routine.
 In a fashion similar to the experiments of Section~\ref{sec: exp:reorder}, and as detailed in Section~\ref{sec: synthetic-generation}, we generated a landscape of blocked matrices varying their $\theta$, $\rho$ and $\Delta$ values. We scrambled the order of their rows, and then reordered them using 1-SA. 
 After that, we took a reordering with $\Delta_H' \approx \Delta$ as a representative for that landscape point, and we multiplied it with a dense matrix using both VBR-cuBLAS and cuSPARSE. The results, summarized in Figure~\ref{fig:performance_heatmap}, confirm that there is a wide class of matrices for which reordering and blocking consistently speed up the multiplication compared to the  sparse-based baseline. We note that we are able to obtain a speed-up for matrices with overall density as small as $\frac{1}{10.000}$. We observe that the performance improvement grows with the size of the dense matrix being multiplied. This behaviour is explained by noting that, for wider matrices, the computational intensity of the problem increases. This makes the reduced indexing and better data locality of our multiplication routine more effective in reducing the computation time.
 \begin{figure*}[htb!]
    \centering
    \begin{subfigure}[t]{0.32\textwidth} 
        \centering
        \includegraphics[width = \linewidth]{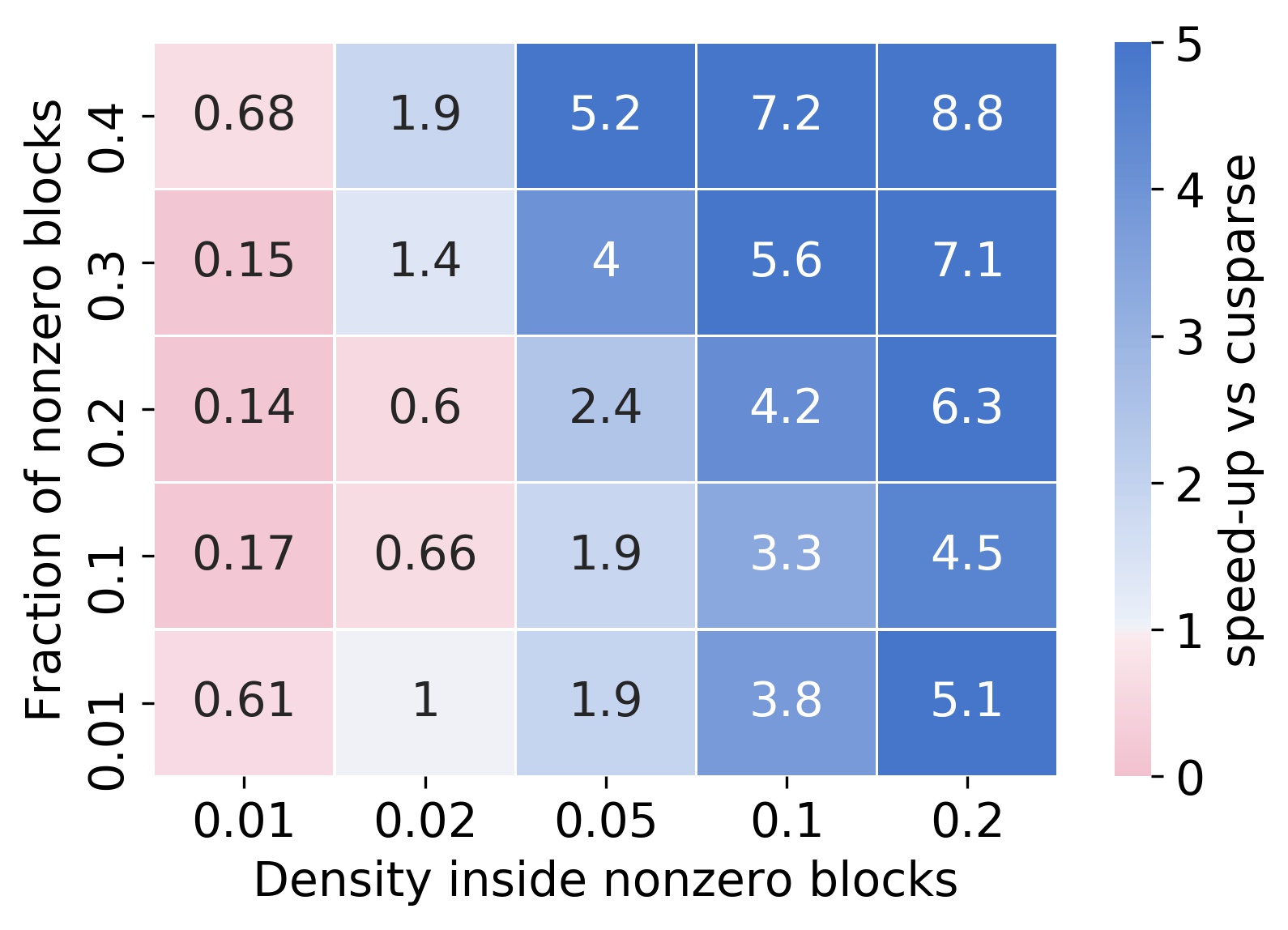}
        \caption{}
    \end{subfigure}
    ~ 
    \begin{subfigure}[t]{0.32\textwidth}     
        \centering
        \includegraphics[width = \linewidth]{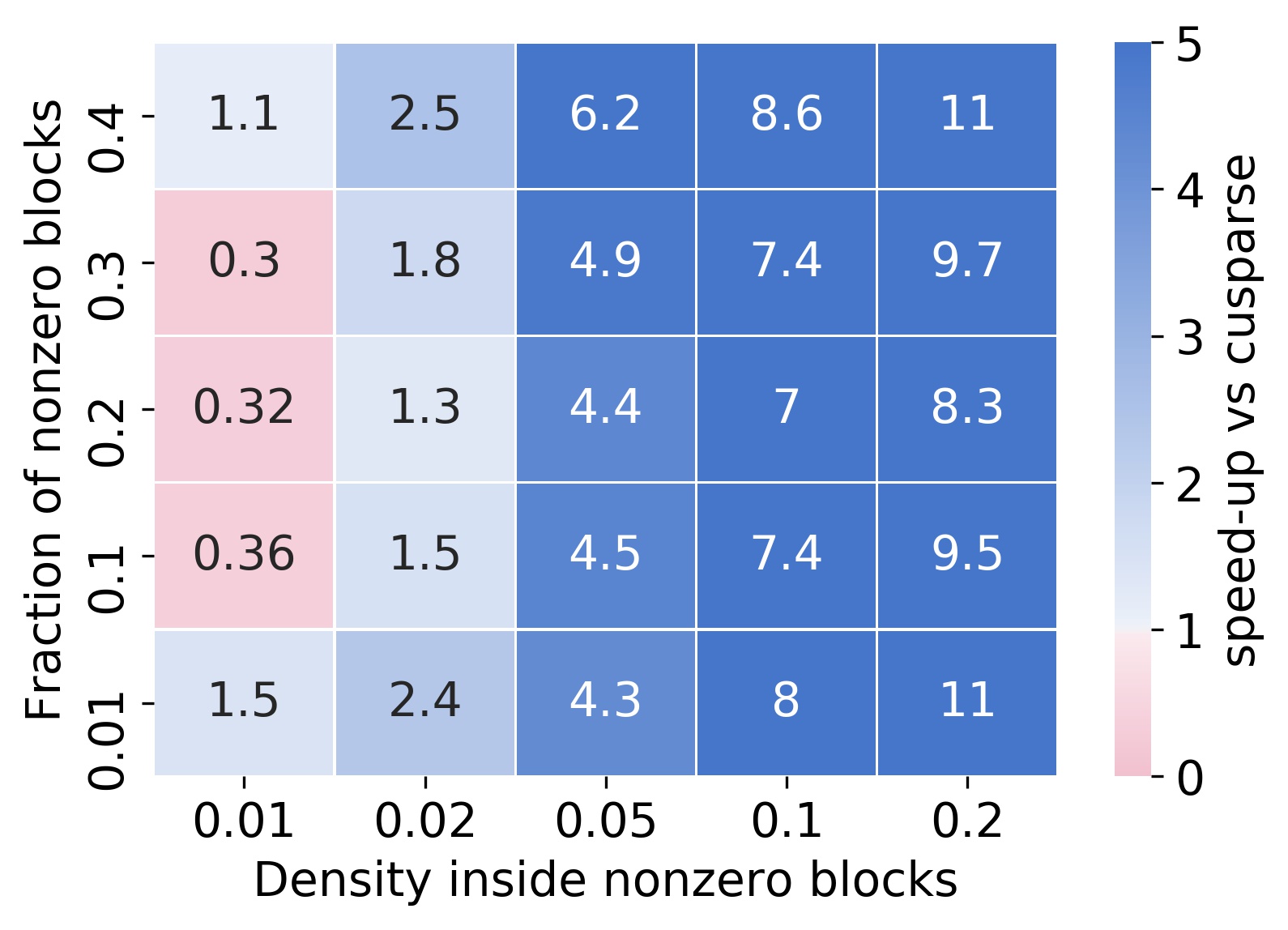}
        \caption{}
    \end{subfigure}
    ~ 
    \begin{subfigure}[t]{0.32\textwidth}     
        \centering
        \includegraphics[width = \linewidth]{ 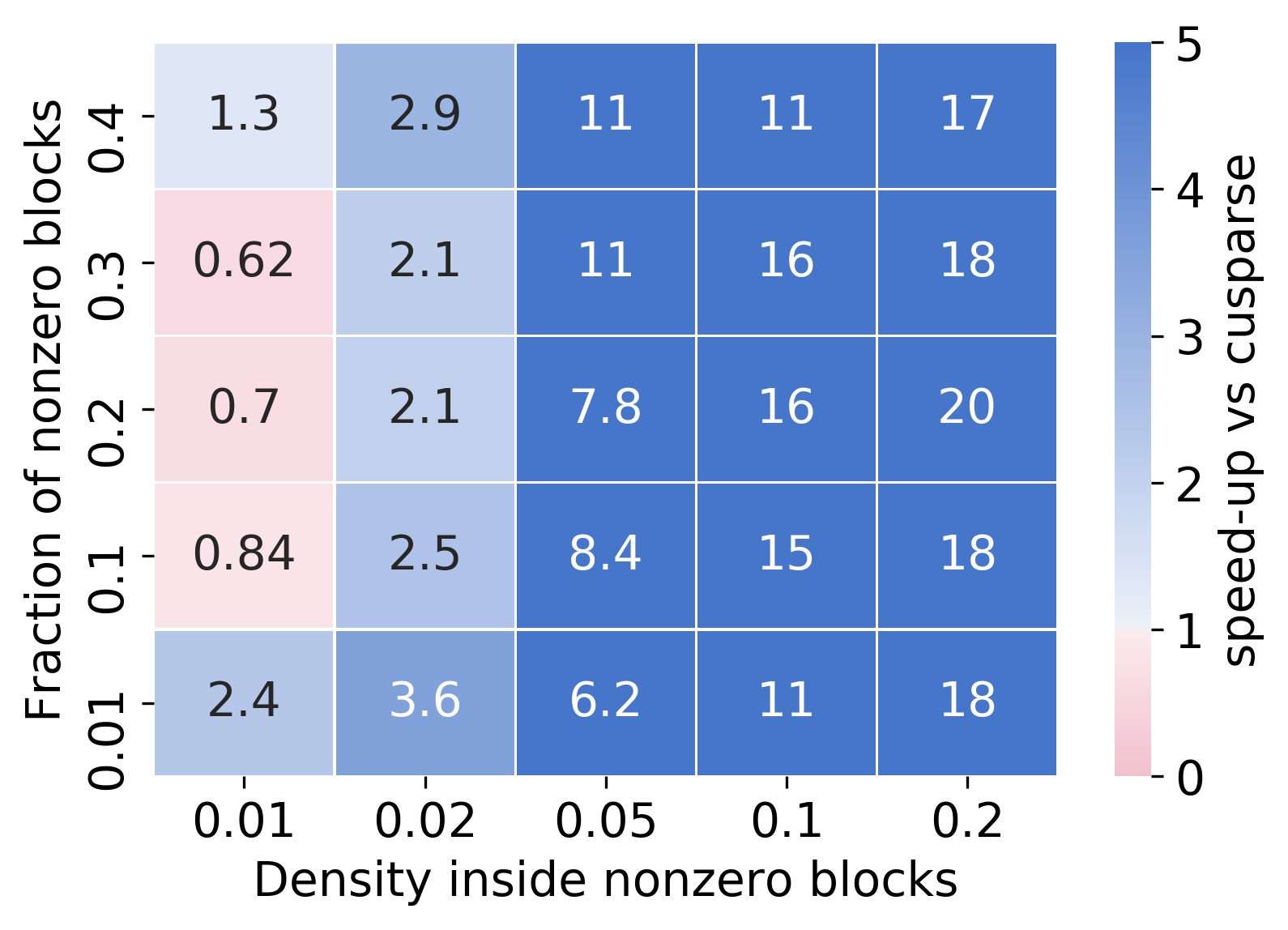}
        \caption{}
    \end{subfigure}

    \caption{Performance improvement of VBR-cuBLAS vs cuSparse for synthetic matrices blocked with 1-SA. The synthetic block matrices are generated as detailed in Sec \ref{sec: synthetic-generation} for each pair of $\rho$ and $\theta$ in the heat-map. The true density of each matrix can be calculated as $\rho \times \theta$. Each $8192 \times 8192$ matrix has been scrambled, blocked with 1-SA, and then multiplied using VBR-cuBLAS and cusparse. Values in the heat-map represent the average ratio between the running time of the two routines over 10 runs. Images (a),(b),(c) show the results when setting the dense matrix width $N$ at $2048$, $4096$ and $8192$ respectively.}
    \label{fig:performance_heatmap}
    
\end{figure*}

\subsubsection{Blocking and multiplication of RMATs}
To better assess the ability of our methodology of speeding up multiplication, we generated RMATs as detailed in Section~\ref{dataset-description}. 
After blocking them with 1-SA, we compared the performance of VBR-cuBLAS and cuSPARSE when multiplying with a dense matrix. In Figure~\ref{fig:barplotRMAT} it can be seen that, for all considered column partition sizes $\Delta_W$, our methodology is effective in reducing the multiplication time. The performance gain grows with the matrix density, and we observe a 4x speed-up for the most dense RMAT. We note that, while using bigger $\Delta_W$ improves performance, the related returns quickly diminish. This is to be expected, since bigger blocks entail higher fill-in.
\begin{figure}
    \centering
    \includegraphics[width = 0.8 \linewidth]{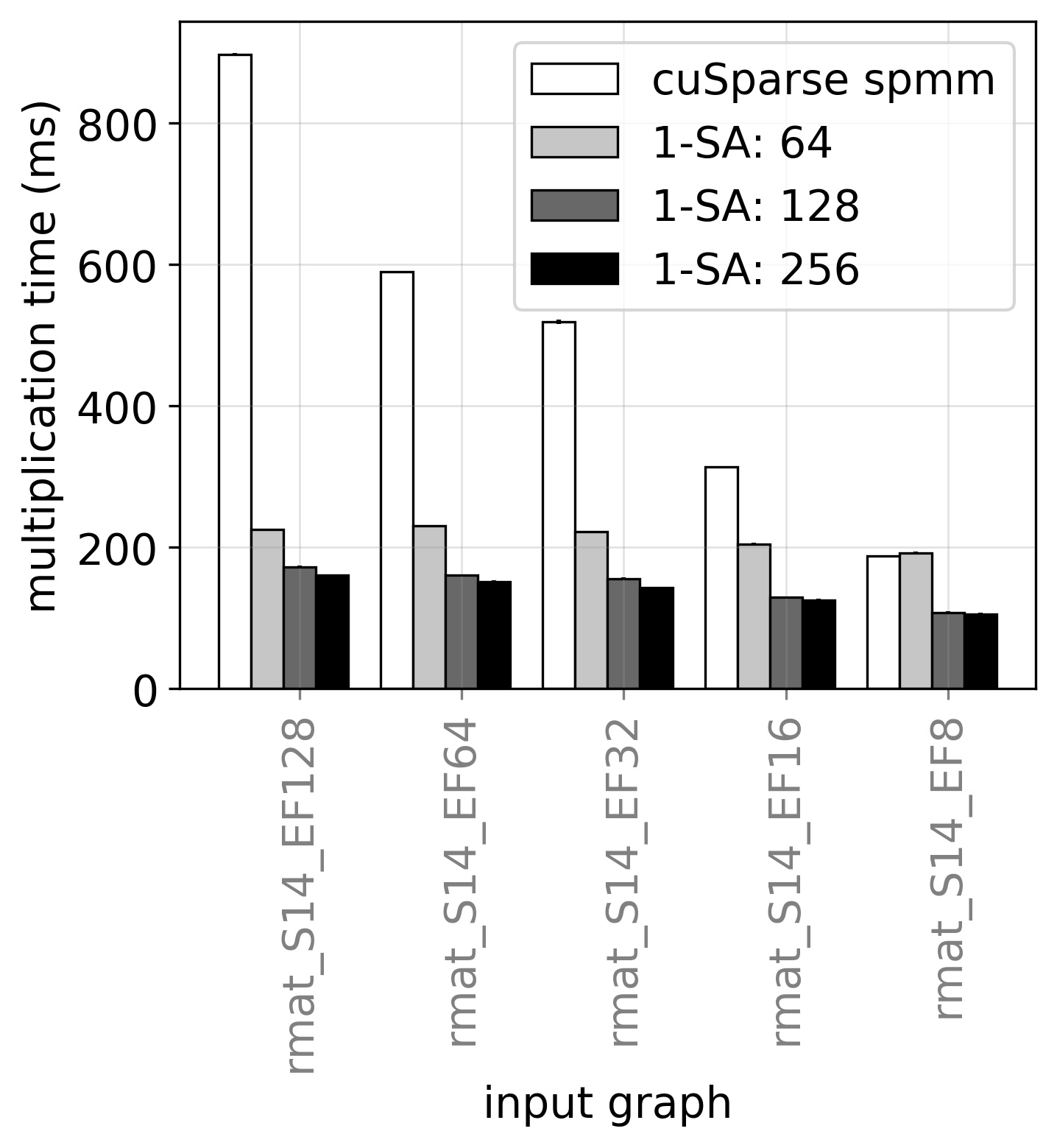}

    \caption{Performance comparison of cuSparse vs 1-SA on synthetic RMATs. RMATs with $2^{14}$ nodes and $8,16,32,64$ or $128$ average edges per node have been generated, scrambled, and blocked with 1-SA, varying the column partition parameter $\Delta_W = 64, 128,256$. Then, the blocked graphs have been multiplied with a dense matrix ($N = 2^{12})$ using VBR-cuBLAS. Each bar shows the average over 10 runs. We omit similar results for other values of $N$.}
    \label{fig:barplotRMAT}
    
\end{figure}

\subsubsection{Blocking and multiplication of real-world matrices}
Finally, we considered the multiplication of real-world sparse matrices from the Network Repository~\cite{netref} database. Details of the employed graphs are provided in Section~\ref{dataset-description}.
Matriced have been blocked with 1-SA, then multiplied with a dense matrix ($N = 4096$) using both VBR-cuBLAS and cuSPARSE. As for the RMATs of the previous section, Figure~\ref{fig:barplot_real} shows that, for all considered column partition sizes $\Delta_W$, our methodology is effective in reducing the multiplication time.

\begin{figure*}[htb!]
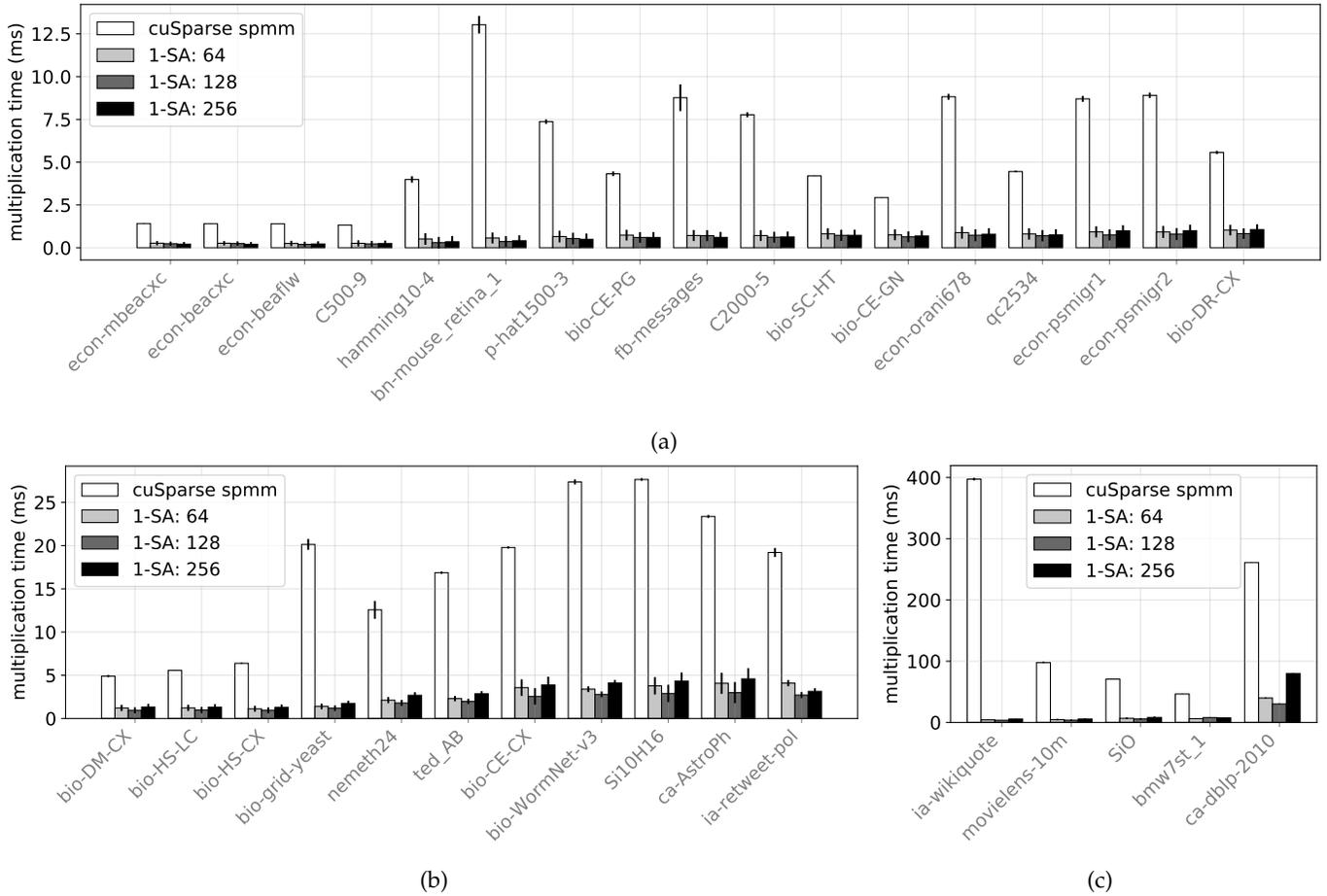

    \centering
    \begin{subfigure}[t]{\textwidth}
        \centering
        \includegraphics[width = \linewidth]{images/small_resaad_blocks_me0_ep0.1_B_4096_hi1.jpg}
        \caption{}
        \label{small-real}
    \end{subfigure}%
    \hfill
    \begin{subfigure}[t]{0.65\textwidth}
        \centering
        \includegraphics[width = \linewidth]{images/medium_resaad_blocks_me0_ep0.1_B_4096_hi1.jpg}
        \caption{}
        \label{medium-real}
    \end{subfigure}
    \begin{subfigure}[t]{0.34\textwidth}
    \centering
    \includegraphics[width = \linewidth]{images/big_resaad_blocks_me0_ep0.1_B_4096_hi1.jpg}
    \caption{}
    \label{big-real}
\end{subfigure}
    
    \caption{Performance comparison of cuSparse vs 1-SA on real-world sparse matrices from the Network Repository\cite{netref} collection. (a),(b) and (c) show three groups of graphs of growing size. See Sec \ref{dataset-description} for the graphs description. The matrices have been scrambled and blocked with 1-SA, varying the column partition parameter $\Delta_W = 64, 128,256$. Then, the blocked matrices have been multiplied with a dense matrix ($N = 4096$) using both VBR-cuBLAS and cusparse. Each bar shows the average over 10 runs.}
    \label{fig:barplot_real}
    
\end{figure*}

\section{Related Works}
\label{sec:related}
The idea of reordering sparse matrices to find their substructures or reorganize their data has been extensively explored, and several algorithms exist to do so with different purposes and efficacy~\cite{CMReorder}\cite{AshMinimumDegree}\cite{Saad}\cite{ApproxMinimumDegree}\cite{DistanceReorder}\cite{PichelReorder}.
One of the first, and most common use of reordering algorithms is as preconditioners for iterative solvers~\cite{benzi2002preconditioning}.
Typical reordering algorithms in this class are the Reversed Cuthill-Mckee (RCM)~\cite{cuthill1972} and the Approximate Minimum Degree (AMD)~\cite{ApproxMinimumDegree}. One consequence of this origin is that most reordering methods work on structurally symmetric matrices, and produce symmetric reorderings since the latter will not change the diagonal entries and the eigenvalues of a matrix. Apart from preventing the use of these algorithms on asymmetric or rectangular sparse matrices, such as those found in pruned neural networks, these assumptions also exclude asymmetric reorderings, which could provide better blocking. 
While many algorithms were developed with the purpose of blocking, such as the popular PABLO~\cite{PABLO} and its variants, they would only consider diagonal blocking. Ashcraft's~\cite{AshMinimumDegree} and Saad's~\cite{Saad} compression methods, the starting point of our research, are an exception in this regard in that they also promote the creation of off-diagonal blocks. 
Another important class of reordering and blocking algorithms comes from graph partitioning problems. In this regard, popular algorithms are nested dissection algorithms~\cite{nestedDissection}, multilevel algorithms~\cite{multilevel}, greedy methods~\cite{recallReorder}, spectral partitioning, and metaheuristics such as simulated annealing or genetic algorithms. Partitioning algorithms can also be extended to treat asymmetric and rectangular matrices, by considering instead bipartite graphs or hypergraphs~\cite{catalyurek1999hypergraph}. While these algorithms are more focused on pure blocking than the ones coming from preconditioning, they focus on minimizing the cut produced by the partition, that is, the total weight of edges connecting partitions. This means that they will produce too an approximately optimal diagonal blocking, but not necessarily a good blocking for block-based multiplication in general.

To the best of our knowledge, the work closer to our own is described by Zachariadis et al.~\cite{accTensor2020}.
They considered the problem of accelerating sparse-sparse matrix multiplication on Nvidia Tensor Cores. In their work, they divided (without reordering) the matrices in tiles, and multiply only nonzero tiles by using auxiliary data structures on the fly, with the consequence of increasing load-and-store operations. Under such a light, their techniques can also be integrated with 1-SA to further increase the density inside of hyper-sparse blocks. In general, their approach could be integrated with ours to treat efficiently even sparser matrices than the one we considered in this paper.

Other flavours of blocking and reordering have also been investigated to accelerate parallel spMM. Pichel et al.~\cite{PichelComparison} reordered several matrices with different reordering algorithms and tested their GPU spMV performance in the CSR, HYB, ELLPACK and BELLPACK formats. They found that most of the time, a reordering would exist that considerably improve performance. They provided some insights on which reordering works best for which format, stating for example that BELLPACK benefits from "distance-based" reorderings and is harmed by bandwidth-reduction algorithms. Yet, in that work, they neither provided a way to find a good reordering for a given matrix nor explored the trade-off between block size and block sparsity.
In previous work, Pichel et al. developed a framework to reorder a sparse matrix in order to cluster entries in the same row~\cite{PichelReorder}; their approach uses an approximated algorithm for the Travelling Salesman Problem (TSP) to maximize locality, i.e., the number of nonzero entries appearing in consecutive locations. In this regard, their approach is similar to Saad's.

Pinar et al.~\cite{Pinar1} also considered a problem close to our own. They defined a storage format based on 1D horizontal blocks (BCRS) to perform spMV on a CPU. Then, they set themselves to reorder the columns of a matrix to maximize such blocking. Having reduced this problem to a TSP, they used TSP heuristics to solve it. Interestingly, they employed very small blocks (1x2 or 1x3) and reported no benefit from using larger blocks. 

Hong et al. observed~\cite{spMMHong} the benefit of identifying clustered entries for spMM on the GPU, devising a hybrid data structure (CSR + row-blocked CSR) that treats separately blocked and non-blocked entries. In a later work~\cite{spMMHongTiling}, they transposed this approach directly to the multiplication phase, avoiding the creation of a non-standard storage format and relying instead on the classic CSR. To improve blocking, they used an ordering strategy (but not a renumbering) that, within a tile, clusters together columns that exceed a certain density. However, their tiles are only used to organize the multiplication, and they still use the same kernel to multiply both sparse and dense tiles.

\section{Conclusion}
\label{sec:conclusion}
We presented a 1-dimensional blocking algorithm able to decompose a sparse matrix in dense blocks, with tunable block size and guaranteed block density. The resulting dense blocks can be easily processed by tensor accelerators and used to speed up the sparse-by-dense matrix multiplication on these architectures. We focused on the Nvidia CUDA hardware/software stack and showed evidence that our approach is several times faster than the sparse-specific cuSPARSE routine.

We evaluated our methodology in the popular CUDA framework, but we note that it does not require specific libraries or architectures to be implemented.
We suggest that this methodology can be employed in all situations where a highly efficient, block-based matrix multiplication routine or architecture exists to bypass the need for low-level sparse-specific solutions.

As a final remark, we did not study in this work the cost of blocking, operating under the assumption that it is discounted over many multiplications. However, we report that our current {\em naive} serial implementation of 1-SA takes roughly the same time to reorder the matrix as cuBLAS and cuSPARSE take to multiply it. In future works, we expect that optimizing and possibly parallelizing 1-SA will make run-time blocking convenient even for a single multiplication instance.

\ifCLASSOPTIONcaptionsoff
  \newpage
\fi



%

\bibliographystyle{IEEEtran}

\bibliography{mainmain}

%

\end{document}